    \providecommand{\U}[1]{\protect\rule{.1in}{.1in}}
    \newcommand{\identity}{\mathbbm{1}} %\id
    \newcommand{\id}{\identity}
    \newcommand{\NN}{\mathcal{N}}
    \newcommand{\MM}{\mathcal{M}}
    \newcommand{\tr}{\mathrm{tr}}
    \newcommand{\ess}{\mathrm{ess\, sup}\,}
    \newcommand{\ket}[2]{\mathinner{|#2\rangle}_{\hspace{-0.1em} #1}}
    \newcommand{\ketbra}[3]{\mathinner{|#2\rangle\langle #3|}_{#1}}
    \newcommand{\ex}[1]{\mathrm{e}^{#1}}
    \newcommand*\circled[1]{\raisebox{.5pt}{\textcircled{\raisebox{-.9pt} {#1}}}
    }
    \newcommand{\EE}{\mathbb{E}}
    \newtheorem{theorem}{Theorem}
    \newtheorem{lemma}[theorem]{Lemma}
    \newenvironment{proof}[1][Proof]{\noindent\textbf{#1.} }{\ \rule{0.5em}{0.5em}}
\begin{document}
    
    \title{\textbf{Ultimate limits for quickest quantum change-point detection}}
    
    \author{Marco Fanizza}
    \email{marco.fanizza@uab.cat}
    \affiliation{F\'{\i}sica Te\`{o}rica: Informaci\'{o} i Fen\`{o}mens Qu\`{a}ntics, Departament de F\'{i}sica, Universitat Aut\`{o}noma de Barcelona, ES-08193 Bellaterra (Barcelona), Spain}
    \author{Christoph Hirche}
    \email{christoph.hirche@gmail.com}
    \affiliation{Center for Quantum Technologies, National University of Singapore \\
                      Zentrum Mathematik, Technical University of Munich, 85748 Garching, Germany}
    \author{John Calsamiglia}
    \email{john.calsamiglia@uab.cat}
    \affiliation{F\'{\i}sica Te\`{o}rica: Informaci\'{o} i Fen\`{o}mens Qu\`{a}ntics, Departament de F\'{i}sica, Universitat Aut\`{o}noma de Barcelona, ES-08193 Bellaterra (Barcelona), Spain}

    \begin{abstract}
    {
    Detecting abrupt changes in data streams is crucial because they are often triggered by events that have important consequences if left unattended. Quickest change point detection has become a vital sequential analysis primitive that aims at designing procedures that minimize the expected detection delay of a change subject to a bounded expected false alarm time. We put forward the quantum counterpart of this fundamental primitive on streams of quantum data. We give a lower-bound on the mean minimum delay when the expected time of a false alarm is asymptotically large, under the most general quantum detection strategy, which is given by a sequence of adaptive collective (potentially weak) measurements on the growing string of quantum data. In addition, we give particular strategies based on repeated measurements on independent blocks of samples, that asymptotically attain the lower-bound, and thereby establish the ultimate quantum limit for quickest change point detection. Finally, we discuss online change point detection in quantum channels.
}
    \end{abstract}
    
    \maketitle

 {   
 The detection of sudden changes of a stochastic random
 variable is one of the most fundamental problems
 in statistical analysis with a wide range of applications. For instance, in the context of industrial process monitoring ~\cite{hawkins2003changepoint,lai1995sequential}, detecting sudden changes in sensor readings can help identify faults or equipment malfunctions that could lead to downtime or accidents if not addressed in a timely manner.  In medical  sciences \cite{chen2011parametric,rosenfield2010change}, change point detection can be used to detect the onset of an infectious disease outbreak or the onset of a medical emergency while monitoring the vital signs of a patient. Similar examples can be found in climate research~\cite{gallagher2013changepoint,reeves2007review}, cybersecurity \cite{kurt_distributed_2018} and robotics~\cite{niekum2015online, ranganathan2012pliss} to name a few. Change-point detection has become a field of its own  in classical   statistical analysis~\cite{basseville1993,chen2011parametric,poor2008,tartakovsky2014} with large activity on the fundamental side, establishing optimal estimators and trade-off regions, in
diverse idealized settings, and on the applied side developing statistical and machine learning techniques that operate with real life data streams. }

    Recently, the concept of change point detection has been generalized to the quantum world~\cite{Akimoto2011,QCP1,QCP2,QCPonline}. Here, we have a device outputting quantum states. By default
    this device will output a certain state $\rho$, but from a given
    (random) point of time it will start producing the state
    $\sigma$. The goal is to identify this change point. In~\cite{Akimoto2011,QCP1,QCP2} the problem has been considered as an instance of hypothesis testing, where one collects a fixed number of quantum states and then tries to determine if and where a change point occurred. Since one requires the full sequence, this is usually considered offline change point detection.   For the special case of pure quantum states, solutions where given for the identification of the position of the change, optimizing the mean probability of error~\cite{QCP1} and the probability of  unambiguous identification~\cite{QCP2}.
    For any practical application, however, one usually
    cannot wait for the entire sequence to be collected (it  could potentially be infinite). Therefore,
    it is well motivated to consider online detection or quickest change point detection: an algorithm that samples every copy sequentially and fires an alarm 
as soon as it detects the change.   In this scenario, the natural quantities to
    consider are the time delay in detecting a change point
    versus the risk of a false alarm, i.e. falsely detecting a change when none
    has happened.
    In classical statistical analysis the most studied such   algorithm is Page's cumulative sum (CUSUM) algorithm~\cite{page54}. Apart from    its computational simplicity, one of its most important  features is its optimality under certain risk criteria, as shown first by Lorden~\cite{lorden71} in the asymptotic setting and in  \cite{moustakides1986optimal,ritov1990decision} in the finite regime. 
    In~\cite{QCPonline} online strategies for quantum change point detection
    have been considered in the restricted scenario of  pure states where unambiguous (local) identification is possible. In this case, the post-change state $\sigma$ does not lie in the support of the default state $\rho$, and one can find a suitable measurement for which one of the outcomes can only be triggered by the post-change state, and thereby guarantees the absence of false alarms while keeping a finite mean detection delay. However, for realistic (mixed) states there is a trade-off between the false alarm time and the detection delay. In order to optimize this trade-off  one needs to consider general sequential quantum strategies. This class of strategies has been recently studied in the context of sequential hypothesis of  quantum states~\cite{vargas2021quantum,li2022optimal} and channels~\cite{SQCD}.
    
As the main contribution  of this work we provide the ultimate quantum limit for quickest change point detection: we give a lower bound  to the mean detection delay that can be reached by the class of most general detection strategies with a given bounded (asymptotically large) expected false alarm time; and we provide a  quantum version of the CUSUM algorithm, called QUSUM, and show that it asymptotically attains the aforementioned lower bound. In particular, this algorithm uses only $l$-local (projective) measurements on the incoming quantum states, still asymptotically it cannot be outperformed by even a sequence of possibly weak, collective and adaptive quantum measurements. 
    
  %  The crucial ingredient for the optimality proof is the use of a strong converse property in quantum hypothesis testing~\cite{Ogawa2000}, to verify that a certain condition that implies a converse bound in a (possibly non i.i.d.) change point problem~\cite{lai1995sequential,tartakovsky2014} has to be satisfied by any candidate quantum algorithm.
  
       The paper is structured as follows: we first present the problem in its simplest form and state the two main results, which are then proven in dedicated sections. 
    Afterwards, we comment on the implications of our results for the problem of change point detection in sequences of quantum channels. We conclude by mentioning open problems.
    
    \textit{Setting and results}:--- {The change point sequence is a sequence of $d$-dimensional states $\{\rho^{(n)}\}$, $n=1,2,...$, such that if $n\leq \nu$, $\rho^{(n)}=\rho$ and if $n>\nu$, $\rho^{(n)}=\sigma$. {At each step $n$, the algorithm receives a copy of the state $\rho^{(n)}$. The latter is then measured together with the current state of previously received copies by a joint%, %possibly weak, 
    quantum measurement, whose outcome determines whether to continue or to stop at step $n$ and emit an alarm that signals that the change has occurred.}} Let's call $T$ the random variable corresponding to the alarm time $n$ at which stopping occurs, for some given strategy.  { Let $\EE_{\infty/\nu}$ denote expectation values with respect to some measurement acting sequentially on a sequence of copies of $\rho$ (the change point never happens, $\nu=\infty$) or the change point sequence for a specific finite $\nu$. Similarly, the probability of an event $E$ is denoted by $P_{\infty/\nu}(E)$. We leave the precise algorithm implicit, but it should be clear that it could be any sequential quantum measurement on the sequence $\{\rho^{(n)}\}$, with outcomes described by discrete random variables $\{X_1,...,X_n\}${, where $X_i$ is the outcome of the measurement realized after getting $\rho^{(i)}$. We also use $X^n$ to denote the vector of random variables $\{X_1,...,X_n\}$ and $x^n$ to refer to the vector of values $\{x_1,...,x_n\}$ they assume.  We define the mean false alarm time as 
    \begin{align}
    \bar{T}_{\text{FA}} = \EE_{\infty}[T]. 
    \end{align}
    {We will consider families of strategies which have $\bar{T}_{\text{FA}}$ larger than a constant. Having a large expected false alarm time is desirable to avoid stopping early, i.e. before the change ($T<\nu$). In addition, in order to quantify the response time or delay ($T-\nu> 0$), we define the so-called worst-{worst case} mean delay as~\cite{tartakovsky2014} 
    \begin{align}
    \bar\tau^{\star} &:=\sup_{\nu\geq 0} \sup_{\tiny\substack{x^\nu \mathrm{ with }\\P_{\infty}(X^\nu=x^\nu)>0}}\EE_{\nu}[T-\nu | T > \nu, X^\nu=x^\nu].
    \end{align}
     {This figure of merit considers the worst mean delay over all possible locations of the change point and over all possible measurement outcomes before the change {\footnote{If POVMs with non-discrete outcomes are allowed, then one may consider also $\bar\tau^{\star} :=\sup_{\nu\geq 0} \ess \EE_{\nu}[T-\nu | T > \nu, X_1,...,X_{\nu}]$, where the essential supremum means that the optimization is done a supremum over set of measurement outcomes of non-zero measure, where the measure of a set is given by its probability. For simplicity, in this paper we consider only measurement with discrete outcomes} %but the proofs can be extended with straightforward modifications to the case of measurement with non-discrete outcomes.}.} A less strict figure of merit is the worst mean delay over all possible locations of the change point:
    \begin{equation}
    \bar\tau :=\sup_{\nu\geq 0} \EE_{\nu}[T-\nu | T > \nu]\leq \bar\tau^{\star}. 
    \end{equation}
    }
    
   In the following we always assume $\mathrm{supp}\, \sigma \subseteq \mathrm{supp}\, \rho$.   Otherwise, as in the pure case discussed above, there exist a projector $\Pi$ such that $\tr[\Pi\rho]=0$ and $\tr[\Pi\sigma]=c>0$, therefore the change can be detected with high probability in finite time, and no false alarms (infinite $\bar{T}_{\text{FA}})$. This also implies that the two entropic quantities, which will play a prominent role here, $D(\sigma||\rho)=\tr[\sigma(\log \sigma-\log\rho)]$ and $D_{\max}(\sigma||\rho)=\inf\{\lambda\geq 0 : \sigma\leq 2^\lambda \rho\}$ are bounded.
      For strategies with fixed false alarm time $\bar{T}_{\text{FA}}$, optimal strategies are those which minimize $\bar{\tau}^*$. We show that the asymptotic behaviour of the optimal $\bar{\tau}^*$ for large $\bar{T}_{\text{FA}}$ is}: \begin{equation}\bar{\tau}^*\sim \frac{\log\bar{T}_{\text{FA}}}{D(\sigma \| \rho )}.
    \end{equation}

%\jccc{ We show..***
%This implies that for large $\bar{T}_{\text{FA}}$ optimal strategies satisfies $\bar{\tau}^*\sim \frac{\log\bar{T}_{\text{FA}}}{D(\sigma \| \rho )}$}.

We prove this via two theorems, which respectively provide an upper bound and a lower bound to  $\bar\tau^\star$.    
    {
    \begin{theorem}[Achievability]
    Given a change point problem with two finite-dimensional states $\rho$ and $\sigma$, $D(\sigma||\rho)<\infty$, for any $\epsilon>0$, and $\bar{T}_{\text{FA}}$ large enough there is a QUSUM algorithm such that $\bar\tau^\star \leq \frac{\log\bar{T}_{\text{FA}}}{D(\sigma \| \rho )(1-\epsilon)}+O(1)$.
    \end{theorem}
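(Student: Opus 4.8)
The plan is to reduce the quantum problem to a purely classical quickest-detection problem to which Lorden's optimality analysis of Page's CUSUM applies, paying only a controlled price for the measurement. Concretely, the QUSUM algorithm groups the incoming states into consecutive, non-overlapping blocks of $l$ copies and performs on each block the \emph{same} fixed projective measurement $M_l$, acting only within the block and non-adaptively across blocks. Because the measurement is block-local and identical from block to block, the sequence of block outcomes is i.i.d.\ with distribution $q_0$ before the change (each block sees $\rho^{\otimes l}$) and $q_1$ after the change (each block sees $\sigma^{\otimes l}$), except for the single block straddling $\nu$. On this classical outcome stream one runs CUSUM on the log-likelihood-ratio increments $Z_k=\log\bigl(q_1(y_k)/q_0(y_k)\bigr)$, firing at the first time the statistic $W_k=\max(0,W_{k-1}+Z_k)$ crosses a threshold $h$.

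The quantum input enters only through the choice of $M_l$. I would select $M_l$ so that the classical relative entropy of its outcome distributions is as close as possible to the quantum value: by the asymptotic achievability of the quantum relative entropy by projective measurements (the regularized measured relative entropy equals $D(\sigma\|\rho)$), for any $\epsilon>0$ one can fix $l$ large enough that $\tfrac1l D(q_1\|q_0)\ge D(\sigma\|\rho)(1-\epsilon)$. The hypotheses $D(\sigma\|\rho)<\infty$ and $\mathrm{supp}\,\sigma\subseteq\mathrm{supp}\,\rho$ (equivalently $D_{\max}(\sigma\|\rho)<\infty$) guarantee that the increments $Z_k$ are bounded, which is what controls the overshoot at the threshold and keeps all correction terms $O(1)$.

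With $l$ (hence $q_0,q_1$) fixed, the remainder is the classical Lorden estimate for i.i.d.\ blocks. On the false-alarm side, a standard Wald/submartingale argument gives $\EE_{\infty}[T]\gtrsim e^{h}$ in units of blocks, i.e.\ $\bar{T}_{\text{FA}}\gtrsim l\,e^{h}$ in units of individual copies. On the delay side the renewal structure of CUSUM is essential: since $W_k\ge 0$ always and the post-change drift $\EE_{q_1}[Z]=D(q_1\|q_0)>0$ is strictly positive, the worst-case conditional delay is attained by restarting the statistic at $0$ at the change, and the mean first-passage time from $0$ to $h$ is $h/D(q_1\|q_0)+O(1)$ by Wald's identity. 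This is exactly the feature that tames the demanding supremum over admissible pre-change histories $x^\nu$ in the definition of $\bar\tau^\star$. Converting back to copies costs a factor $l$ and at most one extra straddling block, i.e.\ an additive $O(l)=O(1)$; combining with $h=\log(\bar{T}_{\text{FA}}/l)+O(1)$ and $D(q_1\|q_0)\ge l\,D(\sigma\|\rho)(1-\epsilon)$ yields $\bar\tau^\star\le \log\bar{T}_{\text{FA}}/[D(\sigma\|\rho)(1-\epsilon)]+O(1)$.

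The step I expect to be the main obstacle is making the worst-case delay rigorous under its full definition: one must verify that conditioning on \emph{arbitrary} admissible pre-change outcome strings $x^\nu$ cannot produce a delay larger than the fresh-start first-passage time, and that the block straddling the change contributes only an $O(1)$ penalty regardless of where inside the block $\nu$ falls. The clean i.i.d.\ block structure and the reset property $W_k\ge 0$ are what make this controllable, but the bookkeeping of the boundary block together with the overshoot estimate (which leans on boundedness of $Z_k$, hence on $D_{\max}<\infty$) is where the genuine care is required.
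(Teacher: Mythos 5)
Your proposal follows essentially the same route as the paper: block-local projective measurements chosen so that the per-copy measured relative entropy exceeds $D(\sigma\|\rho)(1-\epsilon)$ (the paper instantiates this with Hayashi's $\sigma$-independent measurement), followed by classical CUSUM on the i.i.d.\ block outcomes, analyzed via Lorden's theorem, a change-of-measure bound giving $P_\infty(T_1<\infty)\leq e^{-h}$ for the false-alarm side, and Wald's identity with a bounded overshoot for the delay. The two issues you flag---the supremum over pre-change histories and the block straddling the change---are resolved in the paper exactly along the lines you sketch, via Lorden's reduction of $\bar\tau^\star$ to the fresh-start passage time $\EE_0[T_1]$ and an additive $O(l)$ penalty handled in the Supplementary Material.
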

    
     QUSUM is, as the name suggests, a quantum version of the classical CUSUM  algorithm, %which in turn is based on repeated sequential probability ratio tests (SPRT),
     see also~\cite{page54}. We show that the performance of QUSUM is asymptotically optimal:
    
    \begin{theorem}[Optimality]\label{optimtheo}
    Any algorithm for a change point problem with two finite-dimensional states $\rho$ and $\sigma$, $D(\sigma||\rho)<\infty$, with expected false alarm $\bar{T}_{\text{FA}}$ must satisfy $\bar\tau^\star \geq\bar\tau\geq (1-\epsilon)\frac{\log\bar{T}_{\text{FA}}}{D(\sigma \| \rho )}(1+o(1))$ for any $\epsilon>0$.
    \end{theorem}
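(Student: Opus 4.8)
The plan is to argue by contradiction: an algorithm whose worst--case delay is too small must trigger false alarms too often. Since the statement already gives $\bar\tau^\star\geq\bar\tau$, I would lower bound $\bar\tau=\sup_\nu \EE_\nu[T-\nu\mid T>\nu]$. Write $\gamma:=\bar{T}_{\text{FA}}=\EE_\infty[T]$, fix $\epsilon\in(0,1)$, and suppose for contradiction that $\bar\tau\leq m_0:=(1-\epsilon)\tfrac{\log\gamma}{D(\sigma\|\rho)}$ (all logarithms base $2$). Partition the samples into blocks $I_k=((k-1)L,kL]$ of length $L:=(1+\eta)m_0$ for a small $\eta>0$ fixed later, and let $q_k:=P_\infty(T>kL)$. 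The engine of the proof is the change--of--measure identity for the induced classical outcome process: for any event $A$ measurable with respect to $X^n$, $P_\infty(A)=\EE_\nu[2^{-Z_{\nu,n}}\,\mathbbm{1}_A]$, where $Z_{\nu,n}:=\log_2\frac{P_\nu(X^n)}{P_\infty(X^n)}$. Because the first $\nu$ marginals of $P_\nu$ and $P_\infty$ coincide, $Z_{\nu,n}$ accumulates only over the $n-\nu$ post--change samples; I apply the identity with $\nu=(k-1)L$ to the event $A_k=\{(k-1)L<T\leq kL\}$ of stopping inside block $k$.

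On $A_k$ the stopping time lies in $I_k$, so $Z_{(k-1)L,T}$ is built from at most $L$ post--change samples. Intersecting with the typical event $G_k=\{\max_{(k-1)L<n\leq kL}Z_{(k-1)L,n}\leq(1+\delta)LD(\sigma\|\rho)\}$ gives $P_\infty(A_k)\geq 2^{-(1+\delta)LD(\sigma\|\rho)}P_{(k-1)L}(A_k\cap G_k)$. Two ingredients then close a clean recursion. First, the pre--change laws agree, so $P_{(k-1)L}(T>(k-1)L)=q_{k-1}$, and the delay hypothesis with Markov's inequality gives $P_{(k-1)L}(A_k\mid T>(k-1)L)\geq 1-m_0/L=:p_0>0$. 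Second --- the crucial quantum estimate discussed below --- the concentration bound $P_{(k-1)L}(G_k^c\mid T>(k-1)L)\leq L\,2^{-cL}$ holds uniformly over the conditioning history, for some $c>0$. Hence $P_\infty(A_k)\geq \theta\,p_0'\,q_{k-1}$ with $\theta:=2^{-(1+\delta)LD(\sigma\|\rho)}$ and $p_0':=p_0-L2^{-cL}>0$, so that $q_k\leq(1-\theta p_0')q_{k-1}$ decays geometrically with no additive floor. Summing, $\EE_\infty[T]\leq L\sum_{k\geq0}q_k\lesssim L/(\theta p_0')$, which is of order $L\,\gamma^{(1+\delta)(1+\eta)(1-\epsilon)}$. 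Choosing $\delta,\eta$ so small that $(1+\delta)(1+\eta)(1-\epsilon)<1$ makes this $o(\gamma)$, contradicting $\EE_\infty[T]=\gamma$; thus $\bar\tau\geq m_0=(1-\epsilon)\tfrac{\log\gamma}{D(\sigma\|\rho)}$ for all large $\gamma$, which is the claim since $\epsilon>0$ was arbitrary.

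The main obstacle is exactly the concentration estimate $P_\nu\big(Z_{\nu,\nu+L}>(1+\delta)LD(\sigma\|\rho)\big)\leq 2^{-cL}$, uniform over the history on which we condition. Classically $Z_{\nu,\nu+L}$ is a sum of i.i.d. increments and this is just the law of large numbers for a random walk of drift $D(\sigma\|\rho)$; but under adaptive, collective (and possibly weak) measurements the outcomes are correlated and $Z$ is not a sum of independent terms. My plan is to bound its moment generating function instead: for $s\in(0,1]$, $\EE_\nu[2^{sZ_{\nu,\nu+L}}]=2^{s\,D^{\mathrm{cl}}_{1+s}(P_\nu\|P_\infty)}$, and by the data--processing inequality for the quantum R\'enyi divergence (Petz or sandwiched) applied to the measurement channel, together with additivity over the $L$ fresh copies and $D_\alpha(\omega\|\omega)=0$ for the shared post--measurement memory $\omega$, this is at most $2^{sLD_{1+s}(\sigma\|\rho)}$. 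A Chernoff bound then yields $P_\nu(Z_{\nu,\nu+L}>(1+\delta)LD(\sigma\|\rho))\leq 2^{sL(D_{1+s}(\sigma\|\rho)-(1+\delta)D(\sigma\|\rho))}$, and since $D_{1+s}(\sigma\|\rho)\to D(\sigma\|\rho)$ as $s\to0^+$ (finite because $D(\sigma\|\rho)<\infty$), any sufficiently small fixed $s$ makes the exponent negative, giving $c>0$. Running the same estimate on $\omega\otimes\sigma^{\otimes L}$ versus $\omega\otimes\rho^{\otimes L}$ for an arbitrary survived history is what makes the bound uniform, hence the multiplicative error in the recursion. I expect this R\'enyi interpolation --- which upgrades the crude pointwise bound $Z_{\nu,\nu+L}\leq LD_{\max}(\sigma\|\rho)$, giving only the loose constant $D_{\max}$, to the sharp rate $D(\sigma\|\rho)$ --- to be the real work; the change of measure, Markov step, and geometric summation are then routine.
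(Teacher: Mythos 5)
Your proposal is correct, and it reaches the theorem by a route that differs from the paper's in its key technical ingredient. Your skeleton---change of measure at block boundaries, Markov's inequality applied to the assumed delay bound, and geometric decay of the survival probabilities $q_k$ forcing $\EE_\infty[T]=o(\gamma)$---is essentially the contrapositive form of the Lai-type converse that the paper isolates as its Theorem 3 and proves in the Supplemental Material with the same ingredients (the paper organizes it instead around exhibiting a single block where the conditional false-alarm probability is at most $m/\bar{T}_{\text{FA}}$, via the supremum $\Delta$ trick). The genuine divergence is in how the growth of the post-change log-likelihood is controlled under adaptive, collective measurements: the paper packages the first-crossing events $S_i$ into one global binary POVM $\tilde M^n_1$ on $n$ copies and invokes the Ogawa--Nagaoka strong converse to quantum Stein's lemma as a black box, whereas you prove a uniform Chernoff bound directly, using that the conditional outcome distributions in a block are images of a single measurement channel applied to $\omega\otimes\sigma^{\otimes L}$ versus $\omega\otimes\rho^{\otimes L}$ (the memory $\omega$ is identical under both hypotheses because it was built from pre-change data), so data processing plus additivity of the quantum R\'enyi divergence at order $1+s\in(1,2]$ gives $\EE_\nu[2^{sZ_{\nu,n}}\mid x^\nu]\leq 2^{sLD_{1+s}(\sigma\|\rho)}$ uniformly in the history, and continuity $D_{1+s}\to D$ as $s\to 0^+$ (valid in finite dimension under $\mathrm{supp}\,\sigma\subseteq\mathrm{supp}\,\rho$) makes the exponent negative. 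This is, morally, an inlined re-derivation of the strong-converse exponent---the R\'enyi method is how such strong converses are proved---so the two arguments are cousins; what yours buys is self-containedness and an explicit exponential tail (stronger than needed: any uniform $o(1)$ bound keeps $p_0'>0$), and you correctly identify uniformity over survived histories as the crucial point, since an average-case concentration bound would be inflated by $1/q_{k-1}$ upon conditioning and destroy the multiplicative recursion. What the paper's modular route buys is reusability: its reduction plus a known strong converse immediately yields the channel version, where your R\'enyi additivity step would require more care. Two small points of hygiene: state the change of measure as the inequality $P_\infty(A)\geq \EE_\nu\bigl[2^{-Z_{\nu,T}}\mathbbm{1}_A\bigr]$, which is the only direction you use and holds without absolute-continuity caveats on null sets; and record explicitly (as the paper does) that the adaptive block strategy defines a single POVM on memory tensored with the $L$ fresh copies, since that is what licenses the data-processing step.
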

    
In the following we give the proof of these theorems, and discuss some  generalizations thereof.
    
    \textit{Achievability}:---
    }
    We will prove the achievability in two steps. First we study the detection delay of simple algorithm that repeats the same measurement on each individual incoming state, an then extended it to the case  where repeated measurements are performed on blocks of a fixed number of copies. If a fixed POVM $\{ M_{x_{i}} \}$ is applied to the $i$-th state, outcome $x_{i}$ will appear with probability
\begin{align}
  p(x_i)=\tr[M_{x_i} \rho] \mbox{ or }   q(x_i)=\tr[M_{x_i} \sigma],
 \end{align}
 depending on the underlying state. 
 We can define the log-likelihood ratio and their partial sums, 
    \begin{align}
    Z_i = \log\frac{q(x_i)}{p(x_i)},\label{defzi} \; Z^n_j=\sum_{i=j}^n Z_i.
    \end{align}
     It can easily be seen that the mean of the first is given by the relative entropy
    \begin{align}
    \EE_q[Z_i] =\sum_{i}q_{i}\log\frac{p_{i}}{q_{i}}=: D(q \| p ).
    \end{align}
    where $\EE_{q}$ denote expectation values with respect to the probability distribution $q$.   Note that second quantity in \eqref{defzi} with $j=\nu+1$ gives the log-likelihood ratio of a sequence of i.i.d. outcomes $x^{n}$ sampled from either a source with change point at $\nu$ or from a source with no change: $\lambda^{(\nu)}_{n}:=\log\frac{P_{\nu}(X^{k}=x^{k})}{P_{\infty}(X^{n}=x^{n})}=Z^n_{\nu+1}$. 
 Following the CUSUM algorithm we can fix a threshold value $h$ and consider for each possible change point $j$ a stopping time 
    \begin{align}
    T_j = \min\{ n \geq j : Z_j^k \geq h \},
    \end{align}
    where we define $T_j=\infty$ if \mbox{$\{n\geq j:Z_j^n\geq h\}$} $=\emptyset$,
and given these, we define the CUSUM stopping time 
    \begin{align}\label{eq:stopbig}
    T^\star = \min_{j\geq 1} T_j.
    \end{align}
    From the above definition of $\lambda^{(\nu)}_{n}$ it follows that $T^*$ can be understood as the first time when, given the current measurement record, the probability of having had a change in the past is \mbox{$\ex{h}$-times} more likely than having no change ---see the %Figure \ref{fig:1} and 
    Supplementary Material (SM) for a commented picture of a classical CUSUM test. We can now use a result by Lorden (Theorem 2 in~\cite{lorden71}). %which states that the properties of an online change point algorithm can be deduced from the properties of a set of parallel open-ended SPRTs.
    In our notation, we have that if $P_{\infty}(T_1 < \infty) \leq \alpha$,  
    \begin{align}
    \bar{T}_{\text{FA}} = \EE_\infty[T^\star] \geq \frac1\alpha, \,\,\text{and}\,\,
    \bar\tau^\star \leq\EE_{0}[T_1].\label{upperbound0} 
    \end{align}
  %  Now following , by using Wald's identities{ ~\cite{wald2004sequential}, (see also Chapter 3 of~\cite{tartakovsky2014}),
  
The above premise holds since
    \begin{align}\label{boundfalse}
    &P_\infty(T_1<\infty)=\mathbb E_\infty \left[I_{ T_1<\infty}\right]=
    \mathbb E_0\left[\frac{p(x^{T_1})}{q(x^{T_1})} I_{ T_1<\infty}\right]=\nonumber\\ 
   & \EE_0\left[\ex{-Z^{T_1}_1}I_{ T_1<\infty}\right]
   %=\mathbb{E}_0\left[\ex{-h-x}I_{ T_1<\infty}\right]
   \leq \ex{-h} =: \alpha
    \end{align}
    where in the second equality we have used the change of measure in order switch the distributions on which the expectation value is computed: $ \EE_p[f(x)]=\sum_x p(x) f(x)=\sum_x q(x)\frac{p(x)}{q(x)} f(x)=\EE_q[\frac{p(x)}{q(x)} f(x)]$; and the last equality holds because at the stopping time $Z^{T_1}_1$ is necessarily larger than $h$. Finally, using Wald's identity~\cite{wald2004sequential} (see also Chapter 3 of~\cite{tartakovsky2014} and SM), $\EE_0[Z_1^{T_1}]=\EE_0[\sum_{i=1}^{T_1} Z_i]=\EE_0[Z_1] \EE_0[T_1]$:
    
\begin{align}\label{waldlim}
    &\EE_0[T_1] = \frac{\EE_0 [Z_1^{T_1}]}{\EE_0 [Z_1]}=\frac{h+\mathbb E[s]}{D(q\| p)} \rightarrow \frac{h}{D(q\| p)},
\end{align}
    when $h\rightarrow \infty$, where $s:=Z_1^{T_1}-h$ is the \textit{overshoot} and the limit holds since $Z_i<\infty$.
    }
    }
    Putting all the results together we get, using \eqref{waldlim} in \eqref{upperbound0}
    \begin{align}
    \bar\tau^\star &\leq \EE_0[T_1]= \frac{h}{D(q\| p)}+O(1)\leq \frac{\log\bar{T}_{\text{FA}}}{D(q\| p)}+O(1).
    \end{align}
    Optimizing over all measurements gives us the achievabile trade-off for this particular strategy, 
    \begin{align}\label{Eq:meas-RE-achievable}
    \bar\tau^\star \leq \frac{\log\bar{T}_{\text{FA}}}{D_M(\sigma \| \rho )}+O(1) \mbox{ as } \bar{T}_{\text{FA}}\rightarrow\infty
    \end{align}
  in terms of the measured relative entropy $D_M(\sigma \| \rho ):=\sup_{\{M_i\} \mathrm{POVM}} D( q\|p)$~\cite{petz88, Hayashi_2001}. Note that already projective measurements (PVM) achieve the measured relative entropy~\cite{berta2017variational}.
    
    %\begin{figure}[ht]
   % \includegraphics[scale=.15]{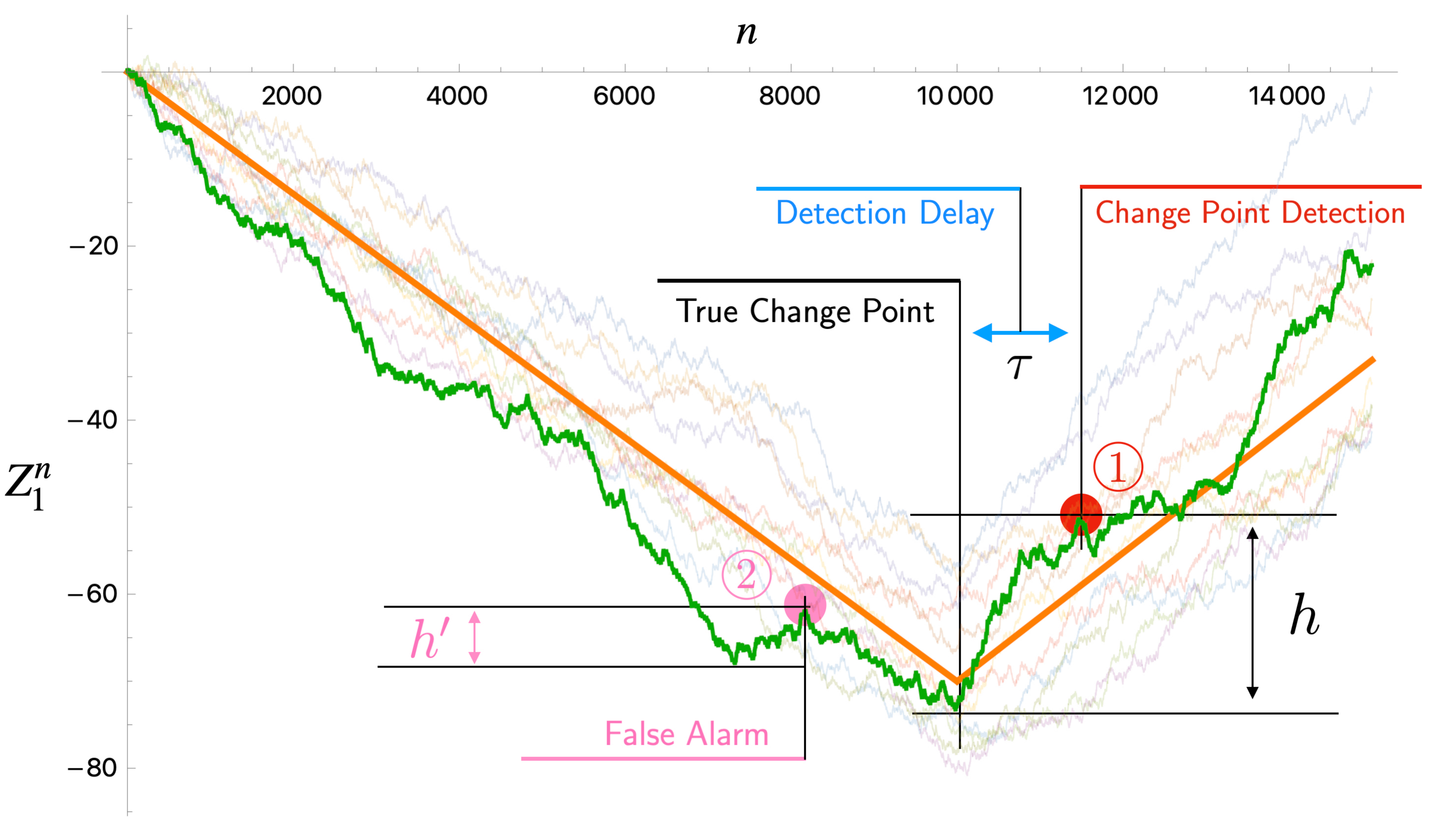}
    %\vspace{-1em}
   % \caption{CUSUM for Bernoulli trial (coin toss) with a bias $p=1/5$ that at time $k=\nu=10^4$ changes to bias $q=1/4$ (see also SM).  %CUSUM  keeps track of $Z_1^k$, updating its value at every time step by $Z_1^n=Z_1^{n-1}+Z(x_n)$ depending on the (random) outcome $x_n$.
   % Dark green and light stochastic curves show the random walk exhibited by $Z_1^n$ for different measurement sequences $x_1^n$.  The orange line shows the average trend, which is given by  $\EE_\nu(Z_1^n)=-n D(p\|q)$  for $n\leq \nu$, while at the change point the slope changes abruptly to $D(q\|p)$.  The algorithm stops and signals a change-point as soon as the log-likelihood exhibits a net increase  larger or equal than $h$ with respect to some point $j$ in the past: $Z_j^n=Z_1^n-Z_1^j\geq h$. Two scenarios are show-cased: $\circled{1}$ a large threshold value ($h=22$) reduces the chances of false alarms at the expense of long detection times; $\circled{2}$  a low threshold value ($h'=6$) can detect the change point with a small delay, but has a high risk of producing false alarms.}
   % \label{fig:1}
   % \end{figure}

    \begin{figure}[ht]
    \includegraphics[width=1\columnwidth]{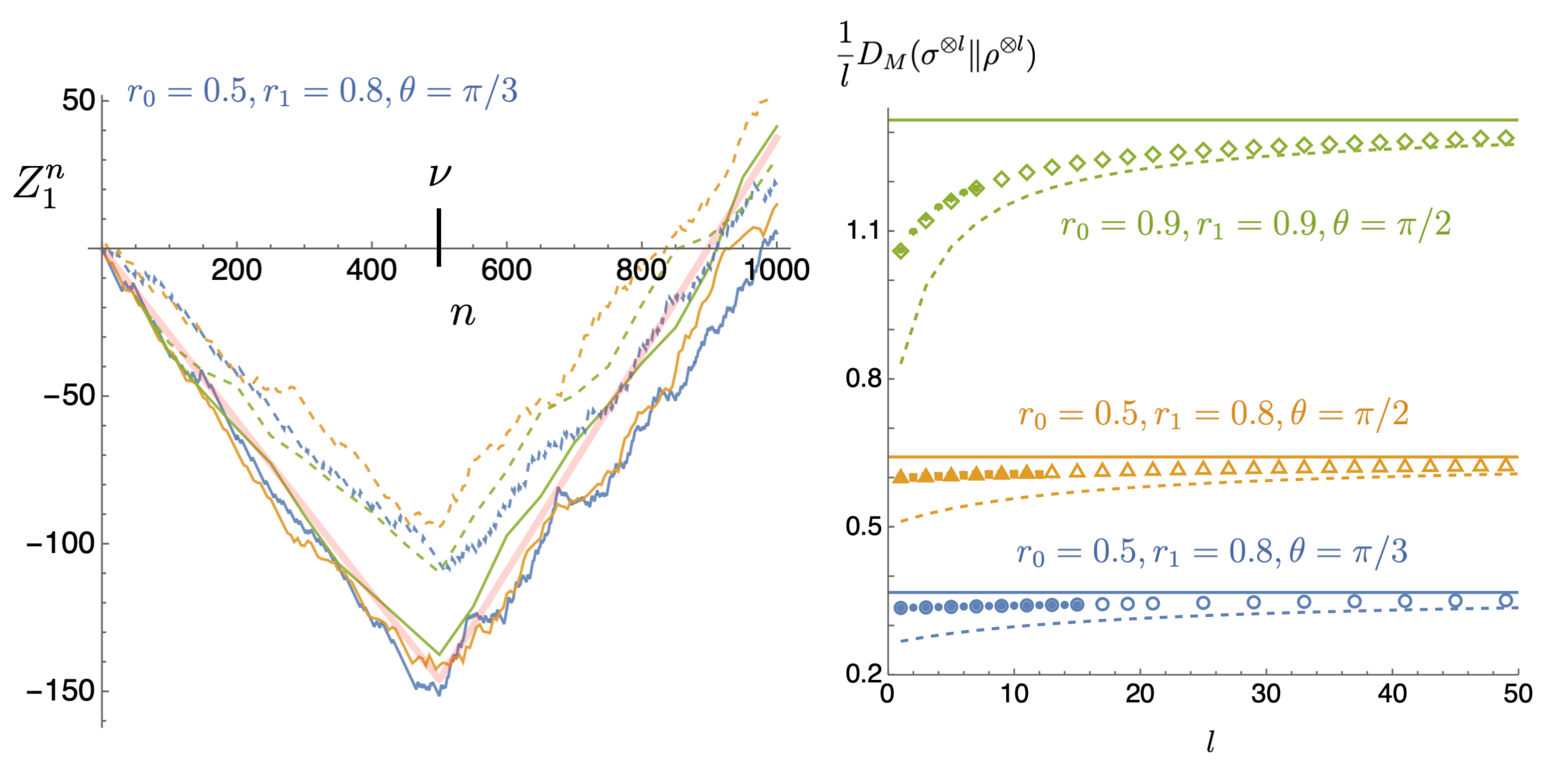}
    \caption{QUSUM for qubit states with Bloch vector of lenths $r_0$ ($\rho$) and  $r_1$ ($\sigma$) and relative angle $\theta$. Left: Log-likelihood stochastic trajectory for Hayashi's (dashed)  and $j$-angle-optimized (solid)     block-sampling strategies for block lengths $l=1,5,50$ (blue, orange, green). Larger $l$ approach the optimal rate given by the quantum relative entropy (thick red line). Right:  Measured relative entropy (per copy) for several strategies: Hayashi (dashed), $j$-angle-optimized (open markers), optimal from SDP (dots) and asymptotically attainable upper bound given by quantum relative entropy.
    }
    \label{fig:2}
    \end{figure}
    
    Now, more generally, instead of measuring each copy of $\rho^{(i)}$ separately, the QUSUM algorithm is based performing a joint measurement on blocks of  $l$ states which are either $\rho^{\otimes l}$ or $\sigma^{\otimes l}$ (assuming the change point happens at a multiple of $l$, see the SM for the general case). The above trade-off is now easily modified to $\bar\tau^\star \leq \frac{\log(\bar{T}_{\text{FA}}/l)}{\frac1l D_M(\sigma^{\otimes l} \| \rho^{\otimes l} )}+O(1).$
    
    If $\rho$ and $\sigma$ are states of a $d$-dimensional Hilbert space, Hayashi showed (Theorem 2 of~\cite{Hayashi_2001}) that for any $\sigma$ and $l$ there is a PVM $\{M^{(l)}_{x_i}\}$, depending only on $\rho$, such that if $p^{(l)}(i)=\tr[M^{(l)}_{x_i}\rho^{\otimes l}]$, $q^{(l)}(i)=\tr[M^{(l)}_{x_i}\sigma^{\otimes l}]$, $\forall \sigma$
    \begin{align}\label{convergencemeasured}
    D(\sigma\| \rho)-\tfrac{(d-1)\log (l+1)}{l}&\leq\frac1l D(q^{(l)} \| p^{(l)} )\leq D(\sigma\| \rho).
    \end{align}
   
    Choosing $l$ such that $\tfrac{1}{l}D(q^{(l)}||p^{(l)})\geq D(\sigma||\rho)(1-\epsilon)$ we obtain the statement of the theorem \footnote{Notice also that an upper bound on the sufficient size of the blocks to get the achievability in the theorem is $k=\tilde O(d/(\epsilon D(\sigma||\rho))$ (hiding logarithmic factors), with a better dependence on the dimension than optimal tomography, which requires $\Theta(d^2)$ copies~\cite{haah2017sample, o2016efficient}. In the SM we argue that for any fixed pair of states one can in fact choose $l$ depending only on relative entropies of the pair, therefore not explicitly on the dimension, based on~\cite{audenaert2012quantum}.}. Since we are considering $h\rightarrow \infty$ we can choose $l$ arbitrarily large and consider the limit $
    \lim_{l\rightarrow\infty} \frac1l D_M(\sigma^{\otimes l} \| \rho^{\otimes l} ) = D(\sigma \| \rho )$.
    This implies that in the asymptotic limit of large $h$ and $l$ we have
    \begin{equation}\label{tradeoff}
    \bar\tau^\star \leq \frac{\log\bar{T}_{\text{FA}}(1+o(1))}{D(\sigma \| \rho )}.
    \end{equation}
    {
    {In Fig.~\ref{fig:2} we illustrate QUSUM tests for qubit states, where we use the measurement of~\cite{Hayashi_2001}, and the enhanced class of $j$-angle-optimized measurements 
    that has a quicker convergence to the measured relative entropy, as the block size increases, as explained in SM.
    %that attain the measured relative entropy for finite block-sizes, as explained in the SM.
    The figure also shows the   the measured relative entropy for different block lengths, which determines the performance of block sampling strategies. The (optimal) measured relative entropy, whose approximation is computed by a semi-definite program (SDP) \cite{berta2017variational,  cvxquad}, is shown to agree with the values obtained for the  $j$-angle-optimized  measurement. {As shown in SM the latter measurement, as well as Hayashi's, are based on Schur sampling \cite{bacon_efficient_2006,harrow_phd,krovi_efficient_2019} and are hence efficiently implementable.}
    
 In addition, since the measurement in \cite{Hayashi_2001} achieving this bound does not depend on $\sigma$, we can also generalize this achievability result in the case where we have that the state after the change point is unknown and belonging to a finite family of states $\mathcal{S}$. In this case, we get asymptotically
    \begin{align}
    \bar\tau^\star \leq \frac{\log\bar{T}_{\text{FA}}(1+o(1))}{\min_{\sigma\in\mathcal{S}}D(\sigma \| \rho )}.
    \end{align}
    Guarantees in the case of infinite families can be obtained if there exists a suitable discretization of $\mathcal {S}$ (see SM for details).
   % {\color{purple}and if the same test is applied to a state $\sigma'$ not in $\mathcal S$ but such that $\min_{\sigma\in S}D(\sigma'||\rho)-D(\sigma'||\sigma)>0,$ we have asymptotically
   % \begin{align}
   % \bar\tau^\star \leq \frac{\log\bar{T}_{\text{FA}}}{\max_{\sigma\in\mathcal{S}}(D(\sigma'||\rho)-D(\sigma'||\sigma))}.
   % \end{align}
   % This allows to give guarantee on the false alarm in the case of unknown post-change state in a continous family, by choosing $\mathcal S$ as a discretization of the family.    
    }
    }
    
    \textit{Optimality}:---
    In this section we prove Theorem 2, which shows that no strategy can attain a better trade-off between the detection delay and the false alarm time than that given  by the quantum relative entropy.    To that end we first have to define the considered class of strategies. Since they can make use of general adaptive measurements,% (possibly weak), 
    it is necessary to specify how the state changes after the measurement, using quantum instruments. A quantum instrument is described by a set of completely-positive trace-non-increasing maps $\{\MM_x(\cdot)\}$, with $\sum_{x}\MM_x(\cdot)$ trace preserving.    For a fixed measurement outcome $x$ the {(normalized)}{}  post-measurement state is
    $\rho_x = \frac{\MM_x(\rho)}{\tr(\MM_x(\rho))}$, 
    where $\tr(\MM_x(\rho))$ corresponds to the probability of obtaining $x$ given the state $\rho$. In our setting, at each step $i$, we get a fresh copy of $\rho^{(i)}$, which is either $\rho$ or $\sigma$. %\jcc{We denote a restriction of a possible sequence of the outcomes  to the first $i$ elements as $x^i$}{}. 
    Let $\rho_{x^{i-1}}$ be the post-measurement state of the $(i-1)$-th step, then we apply the $i$-th quantum instrument $\MM^i$ (possibly depending on previous records $x^{i-1}$) as $\MM^i(\rho^{(i)}\otimes\rho_{x^{i-1}})$, 
    receiving a classical output $x_i$ and a new post-measurement state 
    $\rho_{x^{i}} = \frac{\MM^i_{x_i}(\rho^{(i)}\otimes\rho_{x^{i-1}})}{\tr\MM^i_{x_i}(\rho^{(i)}\otimes\rho_{x^{i-1}})}$. We denote the post-measurement states as $\rho^{(\infty)}_{x^{i}}$ if they originate from a sequence with no change point, and $\rho^{(\nu)}_{x^{i}}$ if they come from a sequence with change point $\nu$.
    We denote  $p(x_i | x^{i-1}) = \tr\MM^i_{x_i}(\rho\otimes\rho^{\infty}_{x^{i-1}})$, and $q^{(\nu)}(x_i | x^{i-1}) = \tr\MM^i_{x_i}(\sigma\otimes\rho^{(\nu)}_{x^{i-1}})$.
      We now define the local and cumulative log-likelihood ratios at step $i$ for a candidate change point $\nu$:
    \begin{align}
    Z_i^{(\nu)} = \log\frac{q^{(\nu)}(x_i | x^{i-1})}{p(x_i | x^{i-1})},\,\,\,\lambda^{(\nu)}_{n} =\sum_{i=\nu+1}^{n} Z_i^{(\nu)}.
    \end{align}
    
    Note that for a fixed sequence $x^i$ we can always write $p(x^i)=\tr M_{x^i} \rho^{\otimes i}$ for a joint measurement $\{ M_{x^i}\}$ giving a sequence of outcomes $x^i$. 
    
    While we still get a sequence of classical measurement outcomes as a result, these can now be highly correlated and the usual techniques for i.i.d. distributions do not longer apply. In the following we will make heavy use of a result initially stated by Lai~\cite{lai98} and reformulated in~\cite{tartakovsky2014}, which we adapt in a form which is applicable to our case, and give the proof in the SM for completeness~\footnote{Ref.~\cite{lai98} shows that the proof applies also when the probability in presence of a change point, conditioned on past events, can depend on the change point location. The statement of the theorem in~\cite{lai98} has an extra condition about the convergence of the log-likelihood which is not needed for our purposes and expresses the tradeoff only as a limit for large $\log \bar{T}_{\text{FA}}$ and $\epsilon\rightarrow 0$, while we state it with finite $\epsilon$}.
     
 \begin{theorem}
    For a change point model with log-likelihoods $Z_i^{(\nu)}$ and $\epsilon>0$, no strategies can exceed the trade-off given by $\bar\tau^*\geq (1-\epsilon) \frac{\log{\bar{T}_{\text{FA}}}}{I}
    (1+o(1))$, for large $\bar{T}_{\text{FA}}$, for any $I$ that satisfies the condition 
    
      \begin{align}\label{eq:condP}
    \lim_{n\rightarrow\infty} &\sup_{\nu\geq 0} \sup_{\tiny\substack{x^\nu \mathrm{ with }\\P_{\infty}(X^\nu=x^\nu)>0}} P^*_\nu(x^\nu) =0 
    \mbox{ where }\\
    P^*_\nu(x^{\nu})&:=P_{\nu}\left\{ \max_{ i\leq n }  \lambda^{(\nu)}_{\nu+i} \geq I(1+\epsilon) n \middle\vert X^\nu=x^\nu \right\}. \nonumber
    \end{align} 
    
    \end{theorem}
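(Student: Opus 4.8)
The plan is to combine a change-of-measure identity with the hypothesis \eqref{eq:condP} and an elementary geometric tail estimate that turns the constraint $\EE_\infty[T]=\bar{T}_{\text{FA}}$ into the existence of a convenient candidate change point. Fix a window length $n=n_\gamma:=\lfloor (1-\epsilon)\tfrac{\log\bar{T}_{\text{FA}}}{I}\rfloor$, which tends to infinity as $\bar{T}_{\text{FA}}\to\infty$. The aim is to exhibit a pair $(\nu,x^\nu)$ with $T>\nu$ and $P_\infty(X^\nu=x^\nu)>0$ for which the conditional probability of stopping within the next $n$ steps vanishes. Since $\{T\leq\nu\}$ is determined by $x^\nu$, this yields $P_\nu(T-\nu>n\mid T>\nu,X^\nu=x^\nu)\to 1$, and hence $\bar\tau^\star\geq \EE_\nu[T-\nu\mid T>\nu,X^\nu=x^\nu]\geq n(1-o(1))$, which is the claimed bound.

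First I would set up the change of measure. Because $\rho^{(i)}=\rho$ for $i\leq \nu$ under both $P_\nu$ and $P_\infty$, the two processes generate identical instruments and outcome statistics up to step $\nu$, so conditioned on $X^\nu=x^\nu$ the Radon--Nikodym factor between $P_\nu$ and $P_\infty$ on $\{X^m=x^m\}$ for $m>\nu$ is exactly $\ex{\lambda^{(\nu)}_m}=\prod_{i=\nu+1}^m q^{(\nu)}(x_i\mid x^{i-1})/p(x_i\mid x^{i-1})$, which is well defined and finite since $\mathrm{supp}\,\sigma\subseteq\mathrm{supp}\,\rho$. Evaluating at the stopping time and splitting on the event $\{\max_{i\leq n}\lambda^{(\nu)}_{\nu+i}\geq I(1+\epsilon)n\}$, whose $P_\nu$-probability given $X^\nu=x^\nu$ is $P^*_\nu(x^\nu)$, and bounding $\ex{\lambda^{(\nu)}_T}\leq \ex{I(1+\epsilon)n}$ on its complement, I obtain
\[
P_\nu(\nu<T\leq \nu+n\mid X^\nu=x^\nu)\leq \ex{I(1+\epsilon)n}\,P_\infty(\nu<T\leq \nu+n\mid X^\nu=x^\nu)+P^*_\nu(x^\nu).
\]
By \eqref{eq:condP} the last term is $o(1)$ uniformly in $\nu$ and $x^\nu$ as $n\to\infty$.

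Next I would convert the false-alarm budget into a small window probability under $P_\infty$. If every non-stopped history satisfied $P_\infty(kn<T\leq (k+1)n\mid X^{kn}=x^{kn})\geq \beta$ on $\{T>kn\}$ for all $k$, a geometric estimate would give $P_\infty(T>kn)\leq (1-\beta)^k$ and hence $\bar{T}_{\text{FA}}=\EE_\infty[T]\leq n\,\EE_\infty[\lceil T/n\rceil]\leq n/\beta$. Taking the contrapositive with $\beta$ slightly above $n/\bar{T}_{\text{FA}}$ shows that there exist $\nu=kn$ and $x^\nu$ with $T>\nu$ such that $P_\infty(\nu<T\leq \nu+n\mid X^\nu=x^\nu)\leq n/\bar{T}_{\text{FA}}$. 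Feeding this into the displayed inequality, the first term is at most $\ex{I(1+\epsilon)n}\,n/\bar{T}_{\text{FA}}\leq n\,\bar{T}_{\text{FA}}^{-\epsilon^2}\to 0$ for $n=n_\gamma$ (since $I(1+\epsilon)n\leq (1-\epsilon^2)\log\bar{T}_{\text{FA}}$), while the second term is $o(1)$. Hence $P_\nu(\nu<T\leq \nu+n\mid X^\nu=x^\nu)\to 0$, which is the decay I wanted, and substituting $n_\gamma\sim (1-\epsilon)\log\bar{T}_{\text{FA}}/I$ finishes the argument.

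The main obstacle is the interplay between the two selection steps: the geometric/pigeonhole estimate fixes the change point $\nu$ for me, so I cannot steer it toward a favourable value, and I must therefore have $P^*_\nu(x^\nu)\to 0$ hold uniformly over all $\nu$ and all admissible pre-change records $x^\nu$ — which is precisely the content of hypothesis \eqref{eq:condP} — rather than merely along a convenient subsequence. A secondary technical point is justifying the change-of-measure identity at the random time $T$ for general adaptive quantum instruments, where post-measurement states and hence conditional laws depend on the full past record; this is handled by using that $T$ is a stopping time and that before $\nu$ the two hypotheses are indistinguishable.
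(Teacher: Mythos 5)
Your proposal is correct and follows essentially the same route as the paper's proof: a change-of-measure inequality with a likelihood threshold $c=I(1+\epsilon)n$ producing the $P^*_\nu$ correction term, a pigeonhole/geometric argument converting the false-alarm budget $\EE_\infty[T]=\bar{T}_{\text{FA}}$ into the existence of a window where stopping under $P_\infty$ is unlikely, and Markov's inequality to conclude with $n\sim(1-\epsilon)\log\bar{T}_{\text{FA}}/I$. The only (harmless) difference is that you condition on a specific pre-change record $X^\nu=x^\nu$, which directly bounds $\bar\tau^\star$ as the statement requires, whereas the paper conditions on the event $\{T>\nu\}$ and thereby obtains the slightly stronger bound $\bar\tau\geq(1-\epsilon)\frac{\log\bar{T}_{\text{FA}}}{I}(1+o(1))$ as well.
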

   In loose terms, the rate $I$ in this theorem has to be such that for any value $I'>I$ arbitrarily close to $I$, the stochastic trajectories exhibited by $\lambda^{(\nu)}_{\nu+i}$ (equivalent to those in the lhs of Fig. \ref{fig:2}) that exceed the value of $I' n$  between the change point $\nu$ and $\nu+n$ occur with a vanishing probability as $n$ increases. 
   
   The challenging art is to determine the smallest $I$ such that Eq.~(\ref{eq:condP}) holds for any $\epsilon>0$. For iid distributions the relative entropy $I=D(q||p)$ can be seen to satisfy this criterion.   In our case we need to find the rate $I$ considering that the underlying probability distribution can be produced by the most general kind of measurement strategy.  
 
    We start by getting rid of the supremum over $\nu$. 
    Note that all states up to position $\nu$ will be $\rho$ as in the case that there is no change. At position $\nu+1$ we will therefore try to discriminate between two states $\sigma \otimes \rho_{x^\nu} \quad\text{and}\quad \rho \otimes \rho_{x^\nu}$. It is now easy to see that for any $\nu$ and any measurement on the sequence of states, there also exists a measurement in the case of $\nu=0$ that results in the same probability distribution after the change point, consisting on simply preparing the state $ \rho_{x^\nu}$ and then applying the original strategy.  
    It follows that we can without loss of generality set $\nu=0$ and therefore also omit the essential supremum.

    {We will now bound $P^*_{0}$ based on the strong converse for quantum Stein's Lemma~\cite{Ogawa2000}, which states that if a sequence of binary tests $\{M_{0}^{(n)},M_1^{(n)}=\id-M_{0}^{(n)}\}$ is such that $\tr[M_1^{(n)}\rho^{\otimes n}]\leq \ex{-n( D(\sigma||\rho)+\delta))}$ for some $\delta>0$, then $\lim_{n\rightarrow\infty}\tr[M_1^{(n)}\sigma^{\otimes n}]=0$.} Let us denote the log-likelihood ratio of the outcome sequence $x^{i}$ for $\nu=0$ as $\lambda_{x^i}$. Define the set $ S_i := \{ x^i | \lambda_{x^i} \geq n I(1+\epsilon), \lambda_{x^j} < n I(1+\epsilon)\, \forall j<i \}. $
   
    We have the following chain of equalities, 
    \begin{align}
        &P^{(0)}\left(\max_{1\leq i\leq n} \lambda_{x^i}\geq n I(1+\epsilon)\right) \nonumber\\
        &= \sum_{\substack{x^n: \\ \max_{1\leq i\leq n} \lambda_{x^i}\geq n I(1+\epsilon)}} q^{(0)}(x^n) \nonumber\\
        &= \sum_{1\leq i \leq n} \sum_{x^i\in S_i} q^{(0)}(x^i) = \sum_{1\leq i \leq n} \sum_{x^i\in S_i} \tr M^{(i)}_{x^{i}}\sigma^{\otimes i} \nonumber\\
        &= \sum_{1\leq i \leq n} \sum_{x^i\in S_i} \tr [M^{(i)}_{x^{i}} \otimes \id^{\otimes (n-i)} \sigma^{\otimes n} ].
    \end{align}
    Defining the binary POVM $\{ \tilde M^i_0 , \tilde M^n_1 = \id - \tilde M^n_0 \}$, with $\tilde M^n_1 = \sum_{1\leq i \leq n} \sum_{x^i\in S_i} M^{(i)}_{x^{i}} \otimes \id^{\otimes (n-i)}$, we get $P^{(0)}\left(\max_{1\leq i\leq n} \lambda_{x^{i}}\geq n I(1+\epsilon)\right)= \tr \tilde M^n_1 \sigma^{\otimes n}$. Also, since   
    $q(x^i)=\ex{\lambda_{x^j}} p(x^i)\geq 
    \ex{n I(1+\epsilon)} p(x^i)$  $\forall x^{i}\in S_i$,
    \begin{align}
      \tr \tilde M^n_1 \rho^{\otimes n} &= \sum_{1\leq i \leq n} \sum_{x^i\in S_i} p(x^i) \nonumber\\
      &\leq  \ex{-n I(1+\epsilon)}\sum_{1\leq i \leq n} \sum_{x^i\in S_i} q(x^i) \nonumber\\
      &= \ex{-n I(1+\epsilon)} \tr \tilde M^n_1 \sigma^{\otimes n}\leq \ex{-n I(1+\epsilon)}. 
    \end{align}
    By the strong converse, this means that if $I\geq D(\sigma||\rho)$, $\lim_{n\rightarrow \infty}P_{(0)}^*=\lim_{n\rightarrow \infty}\tr \tilde M^n_1 \sigma^{\otimes n}=0$.
    
  This proves the optimality, $\bar\tau^\star \geq\bar\tau\geq (1-\epsilon)\frac{\log\bar{T}_{\text{FA}}(1+o(1))}{D(\sigma \| \rho )}$. Even more so, optimality holds also for a collection $\mathcal S$ of possible states after the change, with $I=\min_{\sigma\in\mathcal S}D(\sigma\|\rho)$.

    \textit{Change point with channels}:--- {One can define an analogous change point problem for quantum channels. In this setting, at step $n$ the algorithm receives a black-box use of a channel $\mathcal{N}^{(n)}$, which is $\mathcal{N}$ if $n\leq\nu$ and $\mathcal{M}$ if $k>\nu$, and the most general quantum strategies are allowed, such as using quantum memory, adapting operations between each use of the channel (see SM for a precise definition).} In this case, we can leverage the achievability result for states to obtain $
    \bar\tau^\star \leq \frac{\log\bar{T}_{\text{FA}}(1+o(1))}{D^{\infty}(\mathcal{M} \| \mathcal{N} )}$
    where
    $ D^\infty(\mathcal{M}\|\mathcal{N}) = \lim_{l\rightarrow\infty} \sup_\rho \frac1l D(\mathcal{M}^{\otimes l}(\rho) \| \mathcal{N}^{\otimes l}(\rho) )$ (here and in the following the input state in maximization can be any state entangled with an arbitrarily large reference system). On the other hand, we can adapt the lower bound proof using a known strong converse~\cite{fawzi2021defining}, obtaining 
    $\bar\tau^\star \geq (1-\epsilon)\frac{\log\bar{T}_{\text{FA}}(1+o(1))}{\tilde D^\infty_1(\mathcal{M}\|\mathcal{N})}$, $\forall\epsilon>0$, where $\tilde D_{\alpha}^{\infty}(\mathcal M||\mathcal N)=\lim_{l\rightarrow\infty} \sup_\rho \frac1l \tilde D_{\alpha}(\mathcal{M}^{\otimes l}(\rho) \| \mathcal{N}^{\otimes l}(\rho) )$, $\tilde D_{\alpha}(\rho||\sigma)=\frac{1}{\alpha-1}\log\tr[\sigma^{\frac{1-\alpha}{2\alpha}}\rho\sigma^{\frac{1-\alpha}{2\alpha}}]^{\alpha}$ and $\tilde{D}^\infty_1(\mathcal{M}\|\mathcal{N})=\lim_{\alpha\rightarrow 1 }\tilde D_{\alpha}^{\infty}(\mathcal M||\mathcal N)$. The quantities in the two bounds have been conjectured to coincide~\cite{Fang2021}.

    \textit{Conclusions}:--- {We have showed asymptotic optimality of the QUSUM algorithm, with a tradeoff given by the relative entropy, solving the quickest change point detection problem for quantum states in the asymptotic setting. Our results apply also to the setting when the state after the change is not known. 
%We have proposed a  measurement scheme that based on our numerical results approaches the measured relative entropy for finite block lengths,
We have proposed a measurement scheme and show numerically that it approaches the optimal tradeoff with finite block lengths. It remains unclear how to address the optimality of the quickest detection  for finite number of samples.  In the asymptotic setting, it would be interesting to find achievability results for non-i.i.d. states, especially those for which a strong converse can be found.

    \section*{Acknowledgments} 
     We thank Andreas Winter and Milán Mosonyi for useful discussions. This work was supported by the QuantERA grant C’MON-QSENS!, by Spanish MICINN PCI2019-111869-2 and 
     Spanish  Agencia Estatal de Investigación, project PID2019-107609GB-I00/AEI /10.13039/501100011033 and co-funded by the European Union Regional Development Fund within the ERDF Operational Program of Catalunya (project QuantumCat, ref. 001-P-001644).
    CH has received funding from the European Union's Horizon 2020 research and innovation programme under the Marie Sklodowska-Curie Grant Agreement No. H2020-MSCA-IF-2020-101025848. 
    JC also acknowledges support from ICREA Academia award.  MF is supported by Juan de la Cierva - Formaciòn (Spanish MICIN project FJC2021-047404-I), with funding from MCIN/AEI/10.13039/501100011033 and European Union “NextGenerationEU”/PRTR.

    \bibliography{Bib}

%apsrev4-2.bst 2019-01-14 (MD) hand-edited version of apsrev4-1.bst
%Control: key (0)
%Control: author (8) initials jnrlst
%Control: editor formatted (1) identically to author
%Control: production of article title (0) allowed
%Control: page (0) single
%Control: year (1) truncated
%Control: production of eprint (0) enabled
\begin{thebibliography}{44}%
\makeatletter
\providecommand \@ifxundefined [1]{%
 \@ifx{#1\undefined}
}%
\providecommand \@ifnum [1]{%
 \ifnum #1\expandafter \@firstoftwo
 \else \expandafter \@secondoftwo
 \fi
}%
\providecommand \@ifx [1]{%
 \ifx #1\expandafter \@firstoftwo
 \else \expandafter \@secondoftwo
 \fi
}%
\providecommand \natexlab [1]{#1}%
\providecommand \enquote  [1]{``#1''}%
\providecommand \bibnamefont  [1]{#1}%
\providecommand \bibfnamefont [1]{#1}%
\providecommand \citenamefont [1]{#1}%
\providecommand \href@noop [0]{\@secondoftwo}%
\providecommand \href [0]{\begingroup \@sanitize@url \@href}%
\providecommand \@href[1]{\@@startlink{#1}\@@href}%
\providecommand \@@href[1]{\endgroup#1\@@endlink}%
\providecommand \@sanitize@url [0]{\catcode `\\12\catcode `\$12\catcode
  `\&12\catcode `\#12\catcode `\^12\catcode `\_12\catcode `\%12\relax}%
\providecommand \@@startlink[1]{}%
\providecommand \@@endlink[0]{}%
\providecommand \url  [0]{\begingroup\@sanitize@url \@url }%
\providecommand \@url [1]{\endgroup\@href {#1}{\urlprefix }}%
\providecommand \urlprefix  [0]{URL }%
\providecommand \Eprint [0]{\href }%
\providecommand \doibase [0]{https://doi.org/}%
\providecommand \selectlanguage [0]{\@gobble}%
\providecommand \bibinfo  [0]{\@secondoftwo}%
\providecommand \bibfield  [0]{\@secondoftwo}%
\providecommand \translation [1]{[#1]}%
\providecommand \BibitemOpen [0]{}%
\providecommand \bibitemStop [0]{}%
\providecommand \bibitemNoStop [0]{.\EOS\space}%
\providecommand \EOS [0]{\spacefactor3000\relax}%
\providecommand \BibitemShut  [1]{\csname bibitem#1\endcsname}%
\let\auto@bib@innerbib\@empty
%</preamble>
\bibitem [{\citenamefont {Hawkins}\ \emph {et~al.}(2003)\citenamefont
  {Hawkins}, \citenamefont {Qiu},\ and\ \citenamefont
  {Kang}}]{hawkins2003changepoint}%
  \BibitemOpen
  \bibfield  {author} {\bibinfo {author} {\bibfnamefont {D.~M.}\ \bibnamefont
  {Hawkins}}, \bibinfo {author} {\bibfnamefont {P.}~\bibnamefont {Qiu}},\ and\
  \bibinfo {author} {\bibfnamefont {C.~W.}\ \bibnamefont {Kang}},\ }\bibfield
  {title} {\bibinfo {title} {The changepoint model for statistical process
  control},\ }\href {https://doi.org/10.1080/00224065.2003.11980233} {\bibfield
   {journal} {\bibinfo  {journal} {Journal of quality technology}\ }\textbf
  {\bibinfo {volume} {35}},\ \bibinfo {pages} {355} (\bibinfo {year}
  {2003})}\BibitemShut {NoStop}%
\bibitem [{\citenamefont {Lai}(1995)}]{lai1995sequential}%
  \BibitemOpen
  \bibfield  {author} {\bibinfo {author} {\bibfnamefont {T.~L.}\ \bibnamefont
  {Lai}},\ }\bibfield  {title} {\bibinfo {title} {Sequential changepoint
  detection in quality control and dynamical systems},\ }\href
  {https://doi.org/https://doi.org/10.1111/j.2517-6161.1995.tb02052.x}
  {\bibfield  {journal} {\bibinfo  {journal} {Journal of the Royal Statistical
  Society: Series B (Methodological)}\ }\textbf {\bibinfo {volume} {57}},\
  \bibinfo {pages} {613} (\bibinfo {year} {1995})}\BibitemShut {NoStop}%
\bibitem [{\citenamefont {Chen}\ and\ \citenamefont
  {Gupta}(2011)}]{chen2011parametric}%
  \BibitemOpen
  \bibfield  {author} {\bibinfo {author} {\bibfnamefont {J.}~\bibnamefont
  {Chen}}\ and\ \bibinfo {author} {\bibfnamefont {A.~K.}\ \bibnamefont
  {Gupta}},\ }\href@noop {} {\emph {\bibinfo {title} {Parametric statistical
  change point analysis: with applications to genetics, medicine, and
  finance}}}\ (\bibinfo  {publisher} {Springer Science \& Business Media},\
  \bibinfo {year} {2011})\BibitemShut {NoStop}%
\bibitem [{\citenamefont {Rosenfield}\ \emph {et~al.}(2010)\citenamefont
  {Rosenfield}, \citenamefont {Zhou}, \citenamefont {Wilhelm}, \citenamefont
  {Conrad}, \citenamefont {Roth},\ and\ \citenamefont
  {Meuret}}]{rosenfield2010change}%
  \BibitemOpen
  \bibfield  {author} {\bibinfo {author} {\bibfnamefont {D.}~\bibnamefont
  {Rosenfield}}, \bibinfo {author} {\bibfnamefont {E.}~\bibnamefont {Zhou}},
  \bibinfo {author} {\bibfnamefont {F.~H.}\ \bibnamefont {Wilhelm}}, \bibinfo
  {author} {\bibfnamefont {A.}~\bibnamefont {Conrad}}, \bibinfo {author}
  {\bibfnamefont {W.~T.}\ \bibnamefont {Roth}},\ and\ \bibinfo {author}
  {\bibfnamefont {A.~E.}\ \bibnamefont {Meuret}},\ }\bibfield  {title}
  {\bibinfo {title} {Change point analysis for longitudinal physiological data:
  detection of cardio-respiratory changes preceding panic attacks},\ }\href
  {https://doi.org/10.1016/j.biopsycho.2010.01.020} {\bibfield  {journal}
  {\bibinfo  {journal} {Biological psychology}\ }\textbf {\bibinfo {volume}
  {84}},\ \bibinfo {pages} {112} (\bibinfo {year} {2010})}\BibitemShut
  {NoStop}%
\bibitem [{\citenamefont {Gallagher}\ \emph {et~al.}(2013)\citenamefont
  {Gallagher}, \citenamefont {Lund},\ and\ \citenamefont
  {Robbins}}]{gallagher2013changepoint}%
  \BibitemOpen
  \bibfield  {author} {\bibinfo {author} {\bibfnamefont {C.}~\bibnamefont
  {Gallagher}}, \bibinfo {author} {\bibfnamefont {R.}~\bibnamefont {Lund}},\
  and\ \bibinfo {author} {\bibfnamefont {M.}~\bibnamefont {Robbins}},\
  }\bibfield  {title} {\bibinfo {title} {Changepoint detection in climate time
  series with long-term trends},\ }\href
  {https://doi.org/10.1175/JCLI-D-12-00704.1} {\bibfield  {journal} {\bibinfo
  {journal} {Journal of Climate}\ }\textbf {\bibinfo {volume} {26}},\ \bibinfo
  {pages} {4994} (\bibinfo {year} {2013})}\BibitemShut {NoStop}%
\bibitem [{\citenamefont {Reeves}\ \emph {et~al.}(2007)\citenamefont {Reeves},
  \citenamefont {Chen}, \citenamefont {Wang}, \citenamefont {Lund},\ and\
  \citenamefont {Lu}}]{reeves2007review}%
  \BibitemOpen
  \bibfield  {author} {\bibinfo {author} {\bibfnamefont {J.}~\bibnamefont
  {Reeves}}, \bibinfo {author} {\bibfnamefont {J.}~\bibnamefont {Chen}},
  \bibinfo {author} {\bibfnamefont {X.~L.}\ \bibnamefont {Wang}}, \bibinfo
  {author} {\bibfnamefont {R.}~\bibnamefont {Lund}},\ and\ \bibinfo {author}
  {\bibfnamefont {Q.~Q.}\ \bibnamefont {Lu}},\ }\bibfield  {title} {\bibinfo
  {title} {A review and comparison of changepoint detection techniques for
  climate data},\ }\href {https://doi.org/https://doi.org/10.1175/JAM2493.1}
  {\bibfield  {journal} {\bibinfo  {journal} {Journal of applied meteorology
  and climatology}\ }\textbf {\bibinfo {volume} {46}},\ \bibinfo {pages} {900}
  (\bibinfo {year} {2007})}\BibitemShut {NoStop}%
\bibitem [{\citenamefont {Kurt}\ \emph {et~al.}(2018)\citenamefont {Kurt},
  \citenamefont {Yılmaz},\ and\ \citenamefont {Wang}}]{kurt_distributed_2018}%
  \BibitemOpen
  \bibfield  {author} {\bibinfo {author} {\bibfnamefont {M.~N.}\ \bibnamefont
  {Kurt}}, \bibinfo {author} {\bibfnamefont {Y.}~\bibnamefont {Yılmaz}},\ and\
  \bibinfo {author} {\bibfnamefont {X.}~\bibnamefont {Wang}},\ }\bibfield
  {title} {\bibinfo {title} {Distributed {Quickest} {Detection} of
  {Cyber}-{Attacks} in {Smart} {Grid}},\ }\href
  {https://doi.org/10.1109/TIFS.2018.2800908} {\bibfield  {journal} {\bibinfo
  {journal} {IEEE Transactions on Information Forensics and Security}\ }\textbf
  {\bibinfo {volume} {13}},\ \bibinfo {pages} {2015} (\bibinfo {year}
  {2018})}\BibitemShut {NoStop}%
\bibitem [{\citenamefont {Niekum}\ \emph {et~al.}(2015)\citenamefont {Niekum},
  \citenamefont {Osentoski}, \citenamefont {Atkeson},\ and\ \citenamefont
  {Barto}}]{niekum2015online}%
  \BibitemOpen
  \bibfield  {author} {\bibinfo {author} {\bibfnamefont {S.}~\bibnamefont
  {Niekum}}, \bibinfo {author} {\bibfnamefont {S.}~\bibnamefont {Osentoski}},
  \bibinfo {author} {\bibfnamefont {C.~G.}\ \bibnamefont {Atkeson}},\ and\
  \bibinfo {author} {\bibfnamefont {A.~G.}\ \bibnamefont {Barto}},\ }\bibfield
  {title} {\bibinfo {title} {Online bayesian changepoint detection for
  articulated motion models},\ }in\ \href
  {https://doi.org/10.1109/ICRA.2015.7139383} {\emph {\bibinfo {booktitle}
  {2015 IEEE International Conference on Robotics and Automation (ICRA)}}}\
  (\bibinfo {organization} {IEEE},\ \bibinfo {year} {2015})\ pp.\ \bibinfo
  {pages} {1468--1475}\BibitemShut {NoStop}%
\bibitem [{\citenamefont {Ranganathan}(2012)}]{ranganathan2012pliss}%
  \BibitemOpen
  \bibfield  {author} {\bibinfo {author} {\bibfnamefont {A.}~\bibnamefont
  {Ranganathan}},\ }\bibfield  {title} {\bibinfo {title} {Pliss: labeling
  places using online changepoint detection},\ }\href
  {https://doi.org/https://doi.org/10.1007/s10514-012-9273-4} {\bibfield
  {journal} {\bibinfo  {journal} {Autonomous Robots}\ }\textbf {\bibinfo
  {volume} {32}},\ \bibinfo {pages} {351} (\bibinfo {year} {2012})}\BibitemShut
  {NoStop}%
\bibitem [{\citenamefont {Basseville}\ \emph {et~al.}(1993)\citenamefont
  {Basseville}, \citenamefont {Nikiforov} \emph {et~al.}}]{basseville1993}%
  \BibitemOpen
  \bibfield  {author} {\bibinfo {author} {\bibfnamefont {M.}~\bibnamefont
  {Basseville}}, \bibinfo {author} {\bibfnamefont {I.~V.}\ \bibnamefont
  {Nikiforov}}, \emph {et~al.},\ }\href@noop {} {\emph {\bibinfo {title}
  {Detection of abrupt changes: theory and application}}},\ Vol.\ \bibinfo
  {volume} {104}\ (\bibinfo  {publisher} {Prentice Hall Englewood Cliffs},\
  \bibinfo {year} {1993})\BibitemShut {NoStop}%
\bibitem [{\citenamefont {Poor}\ and\ \citenamefont
  {Hadjiliadis}(2008)}]{poor2008}%
  \BibitemOpen
  \bibfield  {author} {\bibinfo {author} {\bibfnamefont {H.~V.}\ \bibnamefont
  {Poor}}\ and\ \bibinfo {author} {\bibfnamefont {O.}~\bibnamefont
  {Hadjiliadis}},\ }\href@noop {} {\emph {\bibinfo {title} {Quickest
  detection}}}\ (\bibinfo  {publisher} {Cambridge University Press},\ \bibinfo
  {year} {2008})\BibitemShut {NoStop}%
\bibitem [{\citenamefont {Tartakovsky}\ \emph {et~al.}(2014)\citenamefont
  {Tartakovsky}, \citenamefont {Nikiforov},\ and\ \citenamefont
  {Basseville}}]{tartakovsky2014}%
  \BibitemOpen
  \bibfield  {author} {\bibinfo {author} {\bibfnamefont {A.}~\bibnamefont
  {Tartakovsky}}, \bibinfo {author} {\bibfnamefont {I.}~\bibnamefont
  {Nikiforov}},\ and\ \bibinfo {author} {\bibfnamefont {M.}~\bibnamefont
  {Basseville}},\ }\href@noop {} {\emph {\bibinfo {title} {Sequential analysis:
  Hypothesis testing and changepoint detection}}}\ (\bibinfo  {publisher}
  {Chapman and Hall/CRC},\ \bibinfo {year} {2014})\BibitemShut {NoStop}%
\bibitem [{\citenamefont {Akimoto}\ and\ \citenamefont
  {Hayashi}(2011)}]{Akimoto2011}%
  \BibitemOpen
  \bibfield  {author} {\bibinfo {author} {\bibfnamefont {D.}~\bibnamefont
  {Akimoto}}\ and\ \bibinfo {author} {\bibfnamefont {M.}~\bibnamefont
  {Hayashi}},\ }\bibfield  {title} {\bibinfo {title} {Discrimination of the
  change point in a quantum setting},\ }\href
  {https://doi.org/10.1103/PhysRevA.83.052328} {\bibfield  {journal} {\bibinfo
  {journal} {Phys. Rev. A}\ }\textbf {\bibinfo {volume} {83}},\ \bibinfo
  {pages} {052328} (\bibinfo {year} {2011})}\BibitemShut {NoStop}%
\bibitem [{\citenamefont {Sent{\'\i}s}\ \emph {et~al.}(2016)\citenamefont
  {Sent{\'\i}s}, \citenamefont {Bagan}, \citenamefont {Calsamiglia},
  \citenamefont {Chiribella},\ and\ \citenamefont {Munoz-Tapia}}]{QCP1}%
  \BibitemOpen
  \bibfield  {author} {\bibinfo {author} {\bibfnamefont {G.}~\bibnamefont
  {Sent{\'\i}s}}, \bibinfo {author} {\bibfnamefont {E.}~\bibnamefont {Bagan}},
  \bibinfo {author} {\bibfnamefont {J.}~\bibnamefont {Calsamiglia}}, \bibinfo
  {author} {\bibfnamefont {G.}~\bibnamefont {Chiribella}},\ and\ \bibinfo
  {author} {\bibfnamefont {R.}~\bibnamefont {Munoz-Tapia}},\ }\bibfield
  {title} {\bibinfo {title} {Quantum change point},\ }\href
  {https://doi.org/https://doi.org/10.1103/PhysRevLett.117.150502} {\bibfield
  {journal} {\bibinfo  {journal} {Physical Review Letters}\ }\textbf {\bibinfo
  {volume} {117}},\ \bibinfo {pages} {150502} (\bibinfo {year}
  {2016})}\BibitemShut {NoStop}%
\bibitem [{\citenamefont {Sent{\'\i}s}\ \emph {et~al.}(2017)\citenamefont
  {Sent{\'\i}s}, \citenamefont {Calsamiglia},\ and\ \citenamefont
  {Mu{\~n}oz-Tapia}}]{QCP2}%
  \BibitemOpen
  \bibfield  {author} {\bibinfo {author} {\bibfnamefont {G.}~\bibnamefont
  {Sent{\'\i}s}}, \bibinfo {author} {\bibfnamefont {J.}~\bibnamefont
  {Calsamiglia}},\ and\ \bibinfo {author} {\bibfnamefont {R.}~\bibnamefont
  {Mu{\~n}oz-Tapia}},\ }\bibfield  {title} {\bibinfo {title} {Exact
  identification of a quantum change point},\ }\href
  {https://doi.org/https://doi.org/10.1103/PhysRevLett.119.140506} {\bibfield
  {journal} {\bibinfo  {journal} {Physical Review Letters}\ }\textbf {\bibinfo
  {volume} {119}},\ \bibinfo {pages} {140506} (\bibinfo {year}
  {2017})}\BibitemShut {NoStop}%
\bibitem [{\citenamefont {Sent{\'\i}s}\ \emph {et~al.}(2018)\citenamefont
  {Sent{\'\i}s}, \citenamefont {Mart{\'\i}nez-Vargas},\ and\ \citenamefont
  {Mu{\~n}oz-Tapia}}]{QCPonline}%
  \BibitemOpen
  \bibfield  {author} {\bibinfo {author} {\bibfnamefont {G.}~\bibnamefont
  {Sent{\'\i}s}}, \bibinfo {author} {\bibfnamefont {E.}~\bibnamefont
  {Mart{\'\i}nez-Vargas}},\ and\ \bibinfo {author} {\bibfnamefont
  {R.}~\bibnamefont {Mu{\~n}oz-Tapia}},\ }\bibfield  {title} {\bibinfo {title}
  {Online strategies for exactly identifying a quantum change point},\ }\href
  {https://doi.org/https://doi.org/10.1103/PhysRevA.98.052305} {\bibfield
  {journal} {\bibinfo  {journal} {Physical Review A}\ }\textbf {\bibinfo
  {volume} {98}},\ \bibinfo {pages} {052305} (\bibinfo {year}
  {2018})}\BibitemShut {NoStop}%
\bibitem [{\citenamefont {Page}(1954)}]{page54}%
  \BibitemOpen
  \bibfield  {author} {\bibinfo {author} {\bibfnamefont {E.~S.}\ \bibnamefont
  {Page}},\ }\bibfield  {title} {\bibinfo {title} {Continuous inspection
  schemes},\ }\href {https://doi.org/https://doi.org/10.2307/2333009}
  {\bibfield  {journal} {\bibinfo  {journal} {Biometrika}\ }\textbf {\bibinfo
  {volume} {41}},\ \bibinfo {pages} {100} (\bibinfo {year} {1954})}\BibitemShut
  {NoStop}%
\bibitem [{\citenamefont {Lorden}\ \emph {et~al.}(1971)\citenamefont {Lorden}
  \emph {et~al.}}]{lorden71}%
  \BibitemOpen
  \bibfield  {author} {\bibinfo {author} {\bibfnamefont {G.}~\bibnamefont
  {Lorden}} \emph {et~al.},\ }\bibfield  {title} {\bibinfo {title} {Procedures
  for reacting to a change in distribution},\ }\href
  {https://doi.org/https://doi.org/10.1214/aoms/1177693055} {\bibfield
  {journal} {\bibinfo  {journal} {The Annals of Mathematical Statistics}\
  }\textbf {\bibinfo {volume} {42}},\ \bibinfo {pages} {1897} (\bibinfo {year}
  {1971})}\BibitemShut {NoStop}%
\bibitem [{\citenamefont {Moustakides}\ \emph {et~al.}(1986)\citenamefont
  {Moustakides} \emph {et~al.}}]{moustakides1986optimal}%
  \BibitemOpen
  \bibfield  {author} {\bibinfo {author} {\bibfnamefont {G.~V.}\ \bibnamefont
  {Moustakides}} \emph {et~al.},\ }\bibfield  {title} {\bibinfo {title}
  {Optimal stopping times for detecting changes in distributions},\ }\href
  {https://doi.org/10.1214/aos/1176350164} {\bibfield  {journal} {\bibinfo
  {journal} {The Annals of Statistics}\ }\textbf {\bibinfo {volume} {14}},\
  \bibinfo {pages} {1379} (\bibinfo {year} {1986})}\BibitemShut {NoStop}%
\bibitem [{\citenamefont {Ritov}(1990)}]{ritov1990decision}%
  \BibitemOpen
  \bibfield  {author} {\bibinfo {author} {\bibfnamefont {Y.}~\bibnamefont
  {Ritov}},\ }\bibfield  {title} {\bibinfo {title} {Decision theoretic
  optimality of the cusum procedure},\ }\href
  {https://doi.org/10.1214/aos/1176347761} {\bibfield  {journal} {\bibinfo
  {journal} {The Annals of Statistics}\ ,\ \bibinfo {pages} {1464}} (\bibinfo
  {year} {1990})}\BibitemShut {NoStop}%
\bibitem [{\citenamefont {Vargas}\ \emph {et~al.}(2021)\citenamefont {Vargas},
  \citenamefont {Hirche}, \citenamefont {Sent{\'\i}s}, \citenamefont
  {Skotiniotis}, \citenamefont {Carrizo}, \citenamefont {Mu{\~n}oz-Tapia},\
  and\ \citenamefont {Calsamiglia}}]{vargas2021quantum}%
  \BibitemOpen
  \bibfield  {author} {\bibinfo {author} {\bibfnamefont {E.~M.}\ \bibnamefont
  {Vargas}}, \bibinfo {author} {\bibfnamefont {C.}~\bibnamefont {Hirche}},
  \bibinfo {author} {\bibfnamefont {G.}~\bibnamefont {Sent{\'\i}s}}, \bibinfo
  {author} {\bibfnamefont {M.}~\bibnamefont {Skotiniotis}}, \bibinfo {author}
  {\bibfnamefont {M.}~\bibnamefont {Carrizo}}, \bibinfo {author} {\bibfnamefont
  {R.}~\bibnamefont {Mu{\~n}oz-Tapia}},\ and\ \bibinfo {author} {\bibfnamefont
  {J.}~\bibnamefont {Calsamiglia}},\ }\bibfield  {title} {\bibinfo {title}
  {Quantum sequential hypothesis testing},\ }\href
  {https://doi.org/https://doi.org/10.1103/PhysRevLett.126.180502} {\bibfield
  {journal} {\bibinfo  {journal} {Physical Review Letters}\ }\textbf {\bibinfo
  {volume} {126}},\ \bibinfo {pages} {180502} (\bibinfo {year}
  {2021})}\BibitemShut {NoStop}%
\bibitem [{\citenamefont {Li}\ \emph {et~al.}(2022{\natexlab{a}})\citenamefont
  {Li}, \citenamefont {Tan},\ and\ \citenamefont {Tomamichel}}]{li2022optimal}%
  \BibitemOpen
  \bibfield  {author} {\bibinfo {author} {\bibfnamefont {Y.}~\bibnamefont
  {Li}}, \bibinfo {author} {\bibfnamefont {V.~Y.}\ \bibnamefont {Tan}},\ and\
  \bibinfo {author} {\bibfnamefont {M.}~\bibnamefont {Tomamichel}},\ }\bibfield
   {title} {\bibinfo {title} {Optimal adaptive strategies for sequential
  quantum hypothesis testing},\ }\href
  {https://doi.org/https://doi.org/10.1007/s00220-022-04362-5} {\bibfield
  {journal} {\bibinfo  {journal} {Communications in Mathematical Physics}\
  }\textbf {\bibinfo {volume} {392}},\ \bibinfo {pages} {993} (\bibinfo {year}
  {2022}{\natexlab{a}})}\BibitemShut {NoStop}%
\bibitem [{\citenamefont {Li}\ \emph {et~al.}(2022{\natexlab{b}})\citenamefont
  {Li}, \citenamefont {Hirche},\ and\ \citenamefont {Tomamichel}}]{SQCD}%
  \BibitemOpen
  \bibfield  {author} {\bibinfo {author} {\bibfnamefont {Y.}~\bibnamefont
  {Li}}, \bibinfo {author} {\bibfnamefont {C.}~\bibnamefont {Hirche}},\ and\
  \bibinfo {author} {\bibfnamefont {M.}~\bibnamefont {Tomamichel}},\ }\bibfield
   {title} {\bibinfo {title} {Sequential quantum channel discrimination},\ }in\
  \href@noop {} {\emph {\bibinfo {booktitle} {2022 IEEE International Symposium
  on Information Theory (ISIT)}}}\ (\bibinfo {year} {2022})\BibitemShut
  {NoStop}%
\bibitem [{Note1()}]{Note1}%
  \BibitemOpen
  \bibinfo {note} {If POVMs with non-discrete outcomes are allowed, then one
  may consider also $\protect \mathaccentV {bar}016\tau ^{\star } :=\protect
  \qopname \relax m{sup}_{\nu \geq 0} \protect \mathrm {ess\protect \tmspace
  +\thinmuskip {.1667em} sup}\protect \tmspace +\thinmuskip {.1667em}\protect
  \mathbb {E}_{\nu }[T-\nu | T > \nu , X_1,...,X_{\nu }]$, where the essential
  supremum means that the optimization is done a supremum over set of
  measurement outcomes of non-zero measure, where the measure of a set is given
  by its probability. For simplicity, in this paper we consider only
  measurement with discrete outcomes}\BibitemShut {NoStop}%
\bibitem [{\citenamefont {Wald}(2004)}]{wald2004sequential}%
  \BibitemOpen
  \bibfield  {author} {\bibinfo {author} {\bibfnamefont {A.}~\bibnamefont
  {Wald}},\ }\href@noop {} {\emph {\bibinfo {title} {Sequential analysis}}}\
  (\bibinfo  {publisher} {Courier Corporation},\ \bibinfo {year}
  {2004})\BibitemShut {NoStop}%
\bibitem [{\citenamefont {Petz}(1988)}]{petz88}%
  \BibitemOpen
  \bibfield  {author} {\bibinfo {author} {\bibfnamefont {D.}~\bibnamefont
  {Petz}},\ }\bibfield  {title} {\bibinfo {title} {A variational expression for
  the relative entropy},\ }\href {https://doi.org/doi:10.1007/BF01225040}
  {\bibfield  {journal} {\bibinfo  {journal} {Communications in Mathematical
  Physics}\ }\textbf {\bibinfo {volume} {114}},\ \bibinfo {pages} {345}
  (\bibinfo {year} {1988})}\BibitemShut {NoStop}%
\bibitem [{\citenamefont {Hayashi}(2001)}]{Hayashi_2001}%
  \BibitemOpen
  \bibfield  {author} {\bibinfo {author} {\bibfnamefont {M.}~\bibnamefont
  {Hayashi}},\ }\bibfield  {title} {\bibinfo {title} {Asymptotics of quantum
  relative entropy from a representation theoretical viewpoint},\ }\href
  {https://doi.org/10.1088/0305-4470/34/16/309} {\bibfield  {journal} {\bibinfo
   {journal} {Journal of Physics A: Mathematical and General}\ }\textbf
  {\bibinfo {volume} {34}},\ \bibinfo {pages} {3413} (\bibinfo {year}
  {2001})}\BibitemShut {NoStop}%
\bibitem [{\citenamefont {Berta}\ \emph {et~al.}(2017)\citenamefont {Berta},
  \citenamefont {Fawzi},\ and\ \citenamefont
  {Tomamichel}}]{berta2017variational}%
  \BibitemOpen
  \bibfield  {author} {\bibinfo {author} {\bibfnamefont {M.}~\bibnamefont
  {Berta}}, \bibinfo {author} {\bibfnamefont {O.}~\bibnamefont {Fawzi}},\ and\
  \bibinfo {author} {\bibfnamefont {M.}~\bibnamefont {Tomamichel}},\ }\bibfield
   {title} {\bibinfo {title} {On variational expressions for quantum relative
  entropies},\ }\href
  {https://doi.org/https://doi.org/10.1007/s11005-017-0990-7} {\bibfield
  {journal} {\bibinfo  {journal} {Letters in Mathematical Physics}\ }\textbf
  {\bibinfo {volume} {107}},\ \bibinfo {pages} {2239} (\bibinfo {year}
  {2017})}\BibitemShut {NoStop}%
\bibitem [{Note2()}]{Note2}%
  \BibitemOpen
  \bibinfo {note} {Notice also that an upper bound on the sufficient size of
  the blocks to get the achievability in the theorem is $k=\protect
  \mathaccentV {tilde}07EO(d/(\epsilon D(\sigma ||\rho ))$ (hiding logarithmic
  factors), with a better dependence on the dimension than optimal tomography,
  which requires $\Theta (d^2)$ copies~\cite {haah2017sample, o2016efficient}.
  In the SM we argue that for any fixed pair of states one can in fact choose
  $l$ depending only on relative entropies of the pair, therefore not
  explicitly on the dimension, based on~\cite
  {audenaert2012quantum}.}\BibitemShut {Stop}%
\bibitem [{\citenamefont {Fawzi}\ \emph {et~al.}(2019)\citenamefont {Fawzi},
  \citenamefont {Saunderson},\ and\ \citenamefont {Parrilo}}]{cvxquad}%
  \BibitemOpen
  \bibfield  {author} {\bibinfo {author} {\bibfnamefont {H.}~\bibnamefont
  {Fawzi}}, \bibinfo {author} {\bibfnamefont {J.}~\bibnamefont {Saunderson}},\
  and\ \bibinfo {author} {\bibfnamefont {P.~A.}\ \bibnamefont {Parrilo}},\
  }\bibfield  {title} {\bibinfo {title} {Semidefinite approximations of the
  matrix logarithm},\ }\href {https://doi.org/10.1007/s10208-018-9385-0}
  {\bibfield  {journal} {\bibinfo  {journal} {Foundations of Computational
  Mathematics}\ }\textbf {\bibinfo {volume} {19}},\ \bibinfo {pages} {259}
  (\bibinfo {year} {2019})}\BibitemShut {NoStop}%
\bibitem [{\citenamefont {Bacon}\ \emph {et~al.}(2006)\citenamefont {Bacon},
  \citenamefont {Chuang},\ and\ \citenamefont {Harrow}}]{bacon_efficient_2006}%
  \BibitemOpen
  \bibfield  {author} {\bibinfo {author} {\bibfnamefont {D.}~\bibnamefont
  {Bacon}}, \bibinfo {author} {\bibfnamefont {I.~L.}\ \bibnamefont {Chuang}},\
  and\ \bibinfo {author} {\bibfnamefont {A.~W.}\ \bibnamefont {Harrow}},\
  }\bibfield  {title} {\bibinfo {title} {Efficient {Quantum} {Circuits} for
  {Schur} and {Clebsch}-{Gordan} {Transforms}},\ }\href
  {https://doi.org/10.1103/PhysRevLett.97.170502} {\bibfield  {journal}
  {\bibinfo  {journal} {Physical Review Letters}\ }\textbf {\bibinfo {volume}
  {97}},\ \bibinfo {pages} {170502} (\bibinfo {year} {2006})}\BibitemShut
  {NoStop}%
\bibitem [{\citenamefont {Harrow}(2005)}]{harrow_phd}%
  \BibitemOpen
  \bibfield  {author} {\bibinfo {author} {\bibfnamefont {A.~W.}\ \bibnamefont
  {Harrow}},\ }\bibfield  {title} {\bibinfo {title} {{Applications of coherent
  classical communication and Schur duality to quantum information theory}},\
  }\href@noop {} {\bibfield  {journal} {\bibinfo  {journal} {PhD thesis, MIT}\
  } (\bibinfo {year} {2005})}\BibitemShut {NoStop}%
\bibitem [{\citenamefont {Krovi}(2019)}]{krovi_efficient_2019}%
  \BibitemOpen
  \bibfield  {author} {\bibinfo {author} {\bibfnamefont {H.}~\bibnamefont
  {Krovi}},\ }\bibfield  {title} {\bibinfo {title} {An efficient high
  dimensional quantum schur transform},\ }\href
  {https://doi.org/10.22331/q-2019-02-14-122} {\bibfield  {journal} {\bibinfo
  {journal} {Quantum}\ }\textbf {\bibinfo {volume} {3}},\ \bibinfo {pages}
  {122} (\bibinfo {year} {2019})}\BibitemShut {NoStop}%
\bibitem [{\citenamefont {Lai}(1998)}]{lai98}%
  \BibitemOpen
  \bibfield  {author} {\bibinfo {author} {\bibfnamefont {T.~L.}\ \bibnamefont
  {Lai}},\ }\bibfield  {title} {\bibinfo {title} {Information bounds and quick
  detection of parameter changes in stochastic systems},\ }\href
  {https://doi.org/10.1109/18.737522} {\bibfield  {journal} {\bibinfo
  {journal} {IEEE Transactions on Information Theory}\ }\textbf {\bibinfo
  {volume} {44}},\ \bibinfo {pages} {2917} (\bibinfo {year}
  {1998})}\BibitemShut {NoStop}%
\bibitem [{Note3()}]{Note3}%
  \BibitemOpen
  \bibinfo {note} {Ref.~\cite {lai98} shows that the proof applies also when
  the probability in presence of a change point, conditioned on past events,
  can depend on the change point location. The statement of the theorem
  in~\cite {lai98} has an extra condition about the convergence of the
  log-likelihood which is not needed for our purposes and expresses the
  tradeoff only as a limit for large $\protect \qopname \relax o{log}\protect
  \mathaccentV {bar}016{T}_{\protect \text {FA}}$ and $\epsilon \rightarrow 0$,
  while we state it with finite $\epsilon $}\BibitemShut {NoStop}%
\bibitem [{\citenamefont {Ogawa}\ and\ \citenamefont
  {Nagaoka}(2000)}]{Ogawa2000}%
  \BibitemOpen
  \bibfield  {author} {\bibinfo {author} {\bibfnamefont {T.}~\bibnamefont
  {Ogawa}}\ and\ \bibinfo {author} {\bibfnamefont {H.}~\bibnamefont
  {Nagaoka}},\ }\bibfield  {title} {\bibinfo {title} {Strong converse and
  stein's lemma in quantum hypothesis testing},\ }\href
  {https://doi.org/10.1109/18.887855} {\bibfield  {journal} {\bibinfo
  {journal} {IEEE Transactions on Information Theory}\ }\textbf {\bibinfo
  {volume} {46}},\ \bibinfo {pages} {2428} (\bibinfo {year}
  {2000})}\BibitemShut {NoStop}%
\bibitem [{\citenamefont {Fawzi}\ and\ \citenamefont
  {Fawzi}(2021)}]{fawzi2021defining}%
  \BibitemOpen
  \bibfield  {author} {\bibinfo {author} {\bibfnamefont {H.}~\bibnamefont
  {Fawzi}}\ and\ \bibinfo {author} {\bibfnamefont {O.}~\bibnamefont {Fawzi}},\
  }\bibfield  {title} {\bibinfo {title} {Defining quantum divergences via
  convex optimization},\ }\href
  {https://doi.org/https://doi.org/10.22331/q-2021-01-26-387} {\bibfield
  {journal} {\bibinfo  {journal} {Quantum}\ }\textbf {\bibinfo {volume} {5}},\
  \bibinfo {pages} {387} (\bibinfo {year} {2021})}\BibitemShut {NoStop}%
\bibitem [{\citenamefont {Fang}\ \emph {et~al.}(2021)\citenamefont {Fang},
  \citenamefont {Gour},\ and\ \citenamefont {Wang}}]{Fang2021}%
  \BibitemOpen
  \bibfield  {author} {\bibinfo {author} {\bibfnamefont {K.}~\bibnamefont
  {Fang}}, \bibinfo {author} {\bibfnamefont {G.}~\bibnamefont {Gour}},\ and\
  \bibinfo {author} {\bibfnamefont {X.}~\bibnamefont {Wang}},\ }\href
  {https://doi.org/10.48550/ARXIV.2110.14842} {\bibinfo {title} {Towards the
  ultimate limits of quantum channel discrimination}} (\bibinfo {year}
  {2021})\BibitemShut {NoStop}%
\bibitem [{\citenamefont {Haah}\ \emph {et~al.}(2017)\citenamefont {Haah},
  \citenamefont {Harrow}, \citenamefont {Ji}, \citenamefont {Wu},\ and\
  \citenamefont {Yu}}]{haah2017sample}%
  \BibitemOpen
  \bibfield  {author} {\bibinfo {author} {\bibfnamefont {J.}~\bibnamefont
  {Haah}}, \bibinfo {author} {\bibfnamefont {A.~W.}\ \bibnamefont {Harrow}},
  \bibinfo {author} {\bibfnamefont {Z.}~\bibnamefont {Ji}}, \bibinfo {author}
  {\bibfnamefont {X.}~\bibnamefont {Wu}},\ and\ \bibinfo {author}
  {\bibfnamefont {N.}~\bibnamefont {Yu}},\ }\bibfield  {title} {\bibinfo
  {title} {Sample-optimal tomography of quantum states},\ }\href
  {https://doi.org/10.1109/TIT.2017.2719044} {\bibfield  {journal} {\bibinfo
  {journal} {IEEE Transactions on Information Theory}\ }\textbf {\bibinfo
  {volume} {63}},\ \bibinfo {pages} {5628} (\bibinfo {year}
  {2017})}\BibitemShut {NoStop}%
\bibitem [{\citenamefont {O'Donnell}\ and\ \citenamefont
  {Wright}(2016)}]{o2016efficient}%
  \BibitemOpen
  \bibfield  {author} {\bibinfo {author} {\bibfnamefont {R.}~\bibnamefont
  {O'Donnell}}\ and\ \bibinfo {author} {\bibfnamefont {J.}~\bibnamefont
  {Wright}},\ }\bibfield  {title} {\bibinfo {title} {Efficient quantum
  tomography},\ }in\ \href
  {https://doi.org/https://doi.org/10.48550/arXiv.1508.01907} {\emph {\bibinfo
  {booktitle} {Proceedings of the forty-eighth annual ACM symposium on Theory
  of Computing}}}\ (\bibinfo {year} {2016})\ pp.\ \bibinfo {pages}
  {899--912}\BibitemShut {NoStop}%
\bibitem [{\citenamefont {Audenaert}\ \emph {et~al.}(2012)\citenamefont
  {Audenaert}, \citenamefont {Mosonyi},\ and\ \citenamefont
  {Verstraete}}]{audenaert2012quantum}%
  \BibitemOpen
  \bibfield  {author} {\bibinfo {author} {\bibfnamefont {K.~M.}\ \bibnamefont
  {Audenaert}}, \bibinfo {author} {\bibfnamefont {M.}~\bibnamefont {Mosonyi}},\
  and\ \bibinfo {author} {\bibfnamefont {F.}~\bibnamefont {Verstraete}},\
  }\bibfield  {title} {\bibinfo {title} {Quantum state discrimination bounds
  for finite sample size},\ }\href@noop {} {\bibfield  {journal} {\bibinfo
  {journal} {Journal of Mathematical Physics}\ }\textbf {\bibinfo {volume}
  {53}},\ \bibinfo {pages} {122205} (\bibinfo {year} {2012})}\BibitemShut
  {NoStop}%
\bibitem [{\citenamefont {Wilde}\ \emph {et~al.}(2020)\citenamefont {Wilde},
  \citenamefont {Berta}, \citenamefont {Hirche},\ and\ \citenamefont
  {Kaur}}]{wilde2020amortized}%
  \BibitemOpen
  \bibfield  {author} {\bibinfo {author} {\bibfnamefont {M.~M.}\ \bibnamefont
  {Wilde}}, \bibinfo {author} {\bibfnamefont {M.}~\bibnamefont {Berta}},
  \bibinfo {author} {\bibfnamefont {C.}~\bibnamefont {Hirche}},\ and\ \bibinfo
  {author} {\bibfnamefont {E.}~\bibnamefont {Kaur}},\ }\bibfield  {title}
  {\bibinfo {title} {Amortized channel divergence for asymptotic quantum
  channel discrimination},\ }\href
  {https://doi.org/https://doi.org/10.1007/s11005-020-01297-7} {\bibfield
  {journal} {\bibinfo  {journal} {Letters in Mathematical Physics}\ }\textbf
  {\bibinfo {volume} {110}},\ \bibinfo {pages} {2277} (\bibinfo {year}
  {2020})}\BibitemShut {NoStop}%
\bibitem [{\citenamefont {Wang}\ and\ \citenamefont
  {Wilde}(2019)}]{wang2019resource}%
  \BibitemOpen
  \bibfield  {author} {\bibinfo {author} {\bibfnamefont {X.}~\bibnamefont
  {Wang}}\ and\ \bibinfo {author} {\bibfnamefont {M.~M.}\ \bibnamefont
  {Wilde}},\ }\bibfield  {title} {\bibinfo {title} {Resource theory of
  asymmetric distinguishability for quantum channels},\ }\href
  {https://doi.org/https://doi.org/10.1103/PhysRevResearch.1.033169} {\bibfield
   {journal} {\bibinfo  {journal} {Physical Review Research}\ }\textbf
  {\bibinfo {volume} {1}},\ \bibinfo {pages} {033169} (\bibinfo {year}
  {2019})}\BibitemShut {NoStop}%
\bibitem [{\citenamefont {Fang}\ \emph {et~al.}(2020)\citenamefont {Fang},
  \citenamefont {Fawzi}, \citenamefont {Renner},\ and\ \citenamefont
  {Sutter}}]{fang2020chain}%
  \BibitemOpen
  \bibfield  {author} {\bibinfo {author} {\bibfnamefont {K.}~\bibnamefont
  {Fang}}, \bibinfo {author} {\bibfnamefont {O.}~\bibnamefont {Fawzi}},
  \bibinfo {author} {\bibfnamefont {R.}~\bibnamefont {Renner}},\ and\ \bibinfo
  {author} {\bibfnamefont {D.}~\bibnamefont {Sutter}},\ }\bibfield  {title}
  {\bibinfo {title} {Chain rule for the quantum relative entropy},\ }\href
  {https://doi.org/10.1103/PhysRevLett.124.100501} {\bibfield  {journal}
  {\bibinfo  {journal} {Physical Review Letters}\ }\textbf {\bibinfo {volume}
  {124}},\ \bibinfo {pages} {100501} (\bibinfo {year} {2020})}\BibitemShut
  {NoStop}%
\end{thebibliography}%
    %
    %\clearpage
    %\setcounter{equation}{0}
    %\begin{widetext}
    %\appendix*
    %\setcounter{page}{1}

    %\clearpage
    %\end{widetext}

\clearpage

\newpage

\onecolumngrid
\setlength{\baselineskip}{20pt}

% \begin{widetext}
\appendix*
\setcounter{equation}{0}
\renewcommand{\theequation}{S\arabic{equation}}
%\renewcommand{\theequation}{\arabic{equation}}
% \onecolumngrid

\section{SUPPLEMENTAL MATERIAL}

    \subsection{Graphical aid to classical CUSUM for change-point detection}
    \begin{figure}[ht]
    \includegraphics[scale=.25]{CUSUM_curves2.pdf}
    %\vspace{-1em}
    %\caption{Graphical representation of the CUSUM algorithm.}
    \label{fig:1S}
    \end{figure}
    This figure illustrates how the CUSUM algorithm works in a (classical) setting where samples $x_1^{n}=\{x_1,x_2,\ldots, x_n\}$ are obtained from a Bernoulli trial (coin toss) with a bias $p=1/5$ that at time $n=\nu=10^4$ changes to bias $q=1/4$.
    Dark green and light stochastic curves show the random walk exhibited by the log-likelihood ratio $Z_1^n$ for different measurement sequences $x_1^n$.  CUSUM algorithm keeps track of $Z_1^n$, starting at $Z_1^0=0$ and updating its value at every time step by $Z_1^n=Z_1^{n-1}+Z(x_n)$ depending on the (random) outcome $x_n$. The algorithm stops and signals a change-point detection as soon as $\max_{j<n}{Z_j^n}\geq h$, i.e. as soon as the log-likelihood at the current time  exhibits a net increase $Z_1^n-Z_1^j=Z_j^n$ larger or equal than $h$ with respect to some point $j$ in the past. This threshold is the only free parameter of the algorithm and  regulates the trade-off between the mean detection delay and the false alarm rate. Two scenarios are show-cased: $\circled{1}$ a large threshold value ($h=22$) reduces the chances of false alarms at the expense of long detection times; $\circled{2}$  a low threshold value ($h'=6$) can detect the change point with a small delay, but has a high risk of producing false alarms. The orange thick line shows the average trend
    (over many trajectories), which is given by
    $\EE_\nu[Z_1^n]=-n D(p\|q)$  before the change point ($k\leq \nu$), while at the change point the slope changes abruptly to $D(q\|p)$. The detection delay $\tau$, i.e. the time required arrive to the threshold $h$ after the change has happened is given by $\tau\sim h/D(q\| p)$. As discussed in the main text the threshold value also fixes the mean false alarm time, i.e. the mean stopping time under the no-change hypothesis: $\EE_{\infty}[T] \sim \ex{h}$.

    Note that at a particular time $n$, the likelihood corresponding to a change point at $j$ is  $P^{(\nu=j)}(x_1^n)=\prod_{i=1}^\nu p(x_i) \prod_{i=\nu+1}^n q(x_i)$ and that corresponding to no change at all 
    is $P^{(\infty)}(x_1^n)=\prod_{i=1}^n p(x_i)$.
    It is then immediate to check that the log-likelihood ratio between these hypotheses, $\log\frac{P^{(\nu=j)}(x_1^n)}{P^{\infty}(x_1^n)}$, is given by $Z_j^n=\sum_{i=j}^k Z(x_i)$. Therefore, the $j$ that attains the maximum $\max_{j<n}{Z_j^n}$ is precisely the point in the past where the change has most likely happened, and the stopping condition is equivalent to fixing a minimum likelihood ratio for such change point to be accepted.
    
  %  \pagebreak
    
   % \twocolumngrid

    {
    \subsection{Expected value of $T_1$}
    In the main text we have used that $\EE_0[Z_1^{T_1}]=\EE_0[\sum_{i=1}^{T_1} Z_i]=\EE_0[Z_1] \EE_0[T_1]$ (see e.g. Eq.~(\ref{waldlim})). In this section we prove that this identity holds whenever $p\neq q$, and, more in general, if the same measurement is applied to a generic probability distribution $q'$ such that $D(q'||p)-D(q'||q)>0$. The proof follows from the standard treatments in~\cite{tartakovsky2014}. First of all, we have the following lemma. 
    \begin{lemma}{Wald identity (Corollary 2.3.1 in~\cite{tartakovsky2014})}\label{Waldtarta}
    Let $Y_1,Y_2...$ be iid random variables with finite mean $\mathbb E[Y_i]=\mu$. Let $S_n=\sum_{i=1}^{n}Y_i$. If $T$ is a stopping time such that $\mathbb{E}[T]<\infty$ we have that
    \begin{equation}
    \mathbb{E}[S_T]=\mu\mathbb{E}[T].
    \end{equation}
    \end{lemma}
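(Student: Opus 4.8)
The plan is to prove Wald's identity by the standard decomposition of $S_T$ into a sum over the tail events of the stopping time. First I would write $S_T$ pointwise as
\begin{equation}
S_T = \sum_{i=1}^{T} Y_i = \sum_{i=1}^{\infty} Y_i\, I_{\{T\geq i\}},
\end{equation}
which holds because $I_{\{T\geq i\}}=1$ exactly when $i\leq T$. The key structural observation is that, since $T$ is a stopping time, the event $\{T\geq i\}=\{T\leq i-1\}^{c}$ is measurable with respect to $\sigma(Y_1,\dots,Y_{i-1})$ and is therefore independent of $Y_i$. This independence is what converts the random number of summands into a clean product with the common mean $\mu$.

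The second step is to justify exchanging the expectation with the infinite sum. I would invoke Tonelli's theorem on the absolute values, computing
\begin{equation}
\sum_{i=1}^{\infty} \mathbb{E}\!\left[\,|Y_i|\, I_{\{T\geq i\}}\right] = \mathbb{E}[|Y_1|]\sum_{i=1}^{\infty} P(T\geq i) = \mathbb{E}[|Y_1|]\,\mathbb{E}[T] < \infty,
\end{equation}
where the first equality uses both the independence of $I_{\{T\geq i\}}$ from $Y_i$ and the identically-distributed assumption, and the last uses the tail-sum formula $\mathbb{E}[T]=\sum_{i\geq 1}P(T\geq i)$ valid for a nonnegative integer-valued variable. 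Finiteness here rests on \emph{both} hypotheses, $\mathbb{E}[|Y_1|]<\infty$ (finite mean, hence integrability) and $\mathbb{E}[T]<\infty$.

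With absolute summability in hand, Fubini's theorem permits the interchange, and I would then evaluate each term by the same independence,
\begin{equation}
\mathbb{E}\!\left[Y_i\, I_{\{T\geq i\}}\right] = \mathbb{E}[Y_i]\,P(T\geq i) = \mu\, P(T\geq i),
\end{equation}
so that summing and reusing the tail-sum formula gives
\begin{equation}
\mathbb{E}[S_T] = \sum_{i=1}^{\infty}\mu\,P(T\geq i) = \mu\,\mathbb{E}[T],
\end{equation}
as claimed. The measurability of $\{T\geq i\}$ with respect to the past is immediate from the definition of a stopping time, and the term-by-term evaluation is pure bookkeeping; the genuinely delicate point, and the one that truly forces the hypothesis $\mathbb{E}[T]<\infty$, is the integrability bound that licenses the Fubini interchange. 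If that condition failed, the summands could conspire so that $\mathbb{E}[S_T]$ is not well defined or differs from $\mu\,\mathbb{E}[T]$, so I would take care to isolate this estimate as the heart of the argument.
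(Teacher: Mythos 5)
Your proof is correct, but there is nothing in the paper to compare it against: the paper does not prove this lemma at all, it imports it by citation as Corollary 2.3.1 of Tartakovsky, Nikiforov and Basseville, and immediately applies it as a black box (to the log-likelihood increments $Z_i$ and the stopping time $T_1$) in the surrounding subsection. Your argument is the classical elementary proof: write $S_T=\sum_{i\geq 1} Y_i\, I_{\{T\geq i\}}$, observe that $\{T\geq i\}=\{T\leq i-1\}^{c}$ lies in $\sigma(Y_1,\dots,Y_{i-1})$ and is therefore independent of $Y_i$, establish absolute summability $\sum_{i}\mathbb{E}\left[\,|Y_i|\,I_{\{T\geq i\}}\right]=\mathbb{E}[|Y_1|]\,\mathbb{E}[T]<\infty$ via Tonelli and the tail-sum formula, and conclude by Fubini. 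This is sound, and you correctly identify the integrability bound as the step that genuinely consumes the hypothesis $\mathbb{E}[T]<\infty$ (which also guarantees $T<\infty$ almost surely, so that $S_T$ is well defined in the first place). By contrast, the cited textbook obtains the identity within the martingale framework (optional stopping applied to $S_n-n\mu$), where your Fubini estimate plays the role of the integrability condition needed there; your route is more self-contained and elementary, while the martingale route generalizes more readily (e.g.\ to Wald's second identity or to non-i.i.d.\ martingale differences). One small remark: you implicitly take $T$ to be a stopping time for the natural filtration of the $Y_i$; if $T$ is a stopping time for a larger filtration $\{\mathcal{F}_i\}$, the identical proof goes through provided $Y_i$ is $\mathcal{F}_i$-measurable and independent of $\mathcal{F}_{i-1}$, which is the form in which the lemma is actually invoked in sequential-analysis settings, and this costs only a change of wording.
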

    Consider now the stopping time $T_1 = \min\{ k \geq 1 : Z_1^k \geq h \}$, where $Z_{1}^k$, as a function of the outcome $x_i$,  is defined from two distribution $p$ and $q$ as in the text, but the distribution of $x_i$ is given by $p$ before the change and by some other distribution $q'$ after the change. As in the main text, we denote as $\EE_{\infty/\nu}$ the expectation values with respect to some measurement acting sequentially on a sequence of copies of $p$ (the change point never happens, $\nu=\infty$) or the change point sequence for a specific finite $\nu$.
    \begin{lemma}\label{lemmadiff}
    If $D(q'||p)-D(q'||q)>0$, we have 
    \begin{equation}
        \mathbb{E}_0[S_{T_1}]=(D(q'||p)-D(q'||q))\mathbb{E}_0[{T_1}].
    \end{equation}
    \end{lemma}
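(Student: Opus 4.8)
The plan is to recognize the claimed identity as a direct instance of the Wald identity of Lemma~\ref{Waldtarta}, applied to the increments $Z_i$ under the measure $\EE_0$, once two points are settled: the value of the drift and the integrability of the stopping time.

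First I would note that under $\EE_0$ (change point $\nu=0$) every outcome $x_i$ is drawn i.i.d.\ from $q'$, so the increments $Z_i=\log\frac{q(x_i)}{p(x_i)}$ are i.i.d.\ as well. Writing $\log\frac{q}{p}=\log\frac{q'}{p}-\log\frac{q'}{q}$ and taking the expectation over $q'$ gives the common mean
\begin{equation}
\mu:=\EE_0[Z_i]=\EE_{q'}\Bigl[\log\tfrac{q'}{p}\Bigr]-\EE_{q'}\Bigl[\log\tfrac{q'}{q}\Bigr]=D(q'\|p)-D(q'\|q),
\end{equation}
which is positive by hypothesis. (Setting $q'=q$ recovers the main-text drift $D(q\|p)$.)

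The main obstacle is verifying the integrability hypothesis $\EE_0[T_1]<\infty$ required by Lemma~\ref{Waldtarta}; this is exactly where the positivity $\mu>0$ enters. Since $T_1$ is a first-passage time, $\{T_1>n\}\subseteq\{Z_1^n<h\}$, so $\EE_0[T_1]=\sum_{n\geq0}P_0(T_1>n)\leq\sum_{n\geq0}P_0(Z_1^n<h)$. Because there are finitely many discrete outcomes and (under the support conditions that make $\mu$ finite) the $Z_i$ are bounded below, the function $\phi(t):=\EE_0[\ex{-tZ_1}]$ is finite for $t\geq0$, with $\phi(0)=1$ and $\phi'(0)=-\mu<0$; hence $\phi(t)<1$ for some $t>0$. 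A Chernoff bound then gives $P_0(Z_1^n<h)\leq\ex{th}\phi(t)^n$, which is geometrically summable, so $\EE_0[T_1]<\infty$.

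With the drift identified and integrability in hand, applying Lemma~\ref{Waldtarta} with $Y_i=Z_i$, $S_n=Z_1^n$ and stopping time $T_1$ yields $\EE_0[S_{T_1}]=\mu\,\EE_0[T_1]=(D(q'\|p)-D(q'\|q))\,\EE_0[T_1]$, which is the claim. The delicate point throughout is the integrability step: were $\mu\leq0$ the random walk $Z_1^n$ need not reach $h$ in finite expected time, and the Wald identity would not be available.
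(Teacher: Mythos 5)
Your proof is correct, and it departs from the paper's at exactly the one nontrivial step. The drift identification is identical in both: the paper also writes $\mathbb{E}_{q'}[Z_1]=\mathbb{E}_{q'}\bigl[\log\tfrac{q'}{p}\bigr]-\mathbb{E}_{q'}\bigl[\log\tfrac{q'}{q}\bigr]=D(q'\|p)-D(q'\|q)$ and then invokes Lemma~\ref{Waldtarta}. The difference is how the integrability hypothesis $\mathbb{E}_0[T_1]<\infty$ is verified. You use a Chernoff/moment-generating-function argument, $P_0(T_1>n)\le P_0(Z_1^n<h)\le e^{th}\phi(t)^n$ with $\phi(t)=\mathbb{E}_0[e^{-tZ_1}]<1$ for small $t>0$; the paper instead uses a block argument: with $y=(D(q'\|p)-D(q'\|q))/2$ and $m$ chosen so that $my>h$, it claims $P(T_1\ge km)=P(Z_1^j<h\ \forall j<km)\le P\bigl(Z_1^m<my,\,Z_{m+1}^{2m}<my,\dots,Z_{(k-1)m+1}^{km}<my\bigr)\le(1-\epsilon^m)^k$. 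Comparing the two: your route needs the increments $Z_i$ to be bounded below (finite outcome alphabet together with $D(q'\|q)<\infty$, as you flag), whereas the block argument formally needs only a finite positive mean, so it would also cover countable alphabets with unbounded negative $Z_i$. On the other hand, your route is the more watertight of the two, because the paper's event inclusion is not valid as written: a block sum $Z_{(i-1)m+1}^{im}\ge my$ does not force any cumulative sum $Z_1^j$ above $h$, since the walk may have drifted far negative before that block. Concretely, with $q'=q=(0.9,0.1)$, $p=(0.1,0.9)$, $h=\log 9$, $m=3$: three negative increments followed by three positive ones keep every $Z_1^j$ ($j<6$) below $h$ while the second block sum is $3\log 9\ge my$, and this realization has positive probability. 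That inclusion would be sound for the CUSUM time $T^\star$, which monitors all window sums $Z_j^n$, but $T^\star\le T_1$, so this bounds the wrong quantity. Your mgf argument (or, in full generality, a ladder-epoch/renewal argument) is the standard correct way to close this step, and in the setting the paper actually works in --- finite-outcome POVMs on finite-dimensional states --- the lower-boundedness you need is automatic.
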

    \begin{proof}
    The expectation value of $Z_{1}$ is 
    \begin{equation}
    \mathbb E_{q'}[Z_1]=\mathbb E_{q'}[\log \frac{1}{p(x_i)}]- \mathbb E_{q'}[\log \frac{1}{q(x_i)}]=\mathbb E_{q'}[\log \frac{q'(x_i)}{p(x_i)}]-\mathbb E_{q'}[\log \frac{q'(x_i)}{q(x_i)}]=D(q'||p)-D(q'||q),
    \end{equation}
    And the same holds for every $Z_i$.
    Since by assumption $D(q'||p)-D(q'||q)>0$, take $y=(D(q'||p)-D(q'||q))/2>0$. Then we must have that $P(Z_k>y)>\epsilon$ for some $\epsilon>0$, and for any integer $k\geq 1$. Moreover, choose $m$ integer such that $m y> h$, then
    \begin{equation}
    P(Z_1^{m}>m y)>P(Z_1^{m}>m y)\geq P(Z_1>y,...,Z_m>y)\geq \epsilon^m,
    \end{equation}
    implying
    \begin{equation}
    P(T_1\geq k m)=P(Z_1^{j}<h, \forall j< km)\leq P(Z_1^{m}<m y,Z_{m+1}^{2m}<m y, ..., Z_{(m-1)k+1}^{km}<my) \leq (1-\epsilon^m)^k.
    \end{equation}
    Finally
    \begin{equation}
    \EE_{0}[T_1]\leq \sum_{k=1}^{\infty}km P(T_1\geq k m) \leq \sum_{k=1}^{\infty}km (1-\epsilon^m)^k< \infty.
    \end{equation}
    We can thus apply Lemma~(\ref{Waldtarta}) to the sequence $Z_1,Z_2,...$ and the stopping time $T_1$.
    \end{proof}

    \subsection{Measured relative entropy of block measurements}
    
    In the main text, Eq.~\ref{convergencemeasured} we have argued that we can choose $l$ such that $\frac{1}{l}D_{M}(\sigma^{\otimes l}||\rho^{\otimes l})\geq (1-\epsilon)D(\sigma||\rho)$ as a function of $D(\sigma||\rho)$ and $d$, the local dimension, using the bound and the measurement of~\cite{Hayashi_2001}.
        Another bound on the measured relative entropy for i.i.d. states can be obtained from one-shot hypothesis testing. For $\rho,\sigma$ such that $\mathrm{supp}\,\sigma\subseteq \mathrm{supp}\, \rho $, the hypothesis testing relative entropy (for $\epsilon>0$ is defined as
    \begin{equation}
    D_{h}^{\epsilon}(\sigma||\rho)=-\log \left(\min_{0\leq E\leq I}\tr[E\rho]: \tr[E\sigma]\geq 1-\epsilon\right),
    \end{equation}
    and the Rényi relative entropy for $t\in[0,+\infty)\backslash \{1\}$ is defined as
    \begin{equation}
    D_{\alpha}(\sigma||\rho)=\frac{1}{1-\alpha}\log\tr[\sigma^{\alpha}\rho^{1-\alpha}].
    \end{equation}
    From Theorem 3.3 of ~\cite{audenaert2012quantum}, we have
    \begin{equation}
    D_{h}^{\epsilon}(\sigma^{\otimes l}||\rho^{\otimes l})\geq l D(\sigma ||\rho)-\sqrt{l}4\sqrt{2} \log \frac{1}{\epsilon}\log\eta+2\log 2,
    \end{equation}
    with $\eta:=1+2^{D_{3/2}(\sigma||\rho)}+2^{-D_{1/2}(\sigma||\rho)}\leq 2^{D_{3/2}(\sigma||\rho)+2}$.
    
    By choosing the measurement that realizes the minimum with $p:=\tr[E\sigma^{\otimes l}]$, $q:=\tr[E\rho^{\otimes l}]=2^{-D_{h}^{\epsilon/2}(\sigma^{\otimes l}||\rho^{\otimes l})}$, we obtain
    $D_{M}(\sigma^{\otimes l}||\rho^{\otimes l})\geq p \log \frac{p}{q}+(1-p)\log\frac{1-p}{1-q}\geq -(1-\epsilon)\log q-\log 2\geq (1-\epsilon/2)D_{h}^{\epsilon/2}(\sigma^{\otimes l}||\rho^{\otimes l})-\log 2$.
    
    Therefore, we obtain
    \begin{equation}
    \frac{1}{l}D_{M}(\sigma^{\otimes l}||\rho^{\otimes l})\geq (1-\epsilon/2)D(\sigma||\rho)-(1-\epsilon/2)\left(\frac{4\sqrt{2}(D_{3/2}(\sigma||\rho)+2)}{\sqrt{l}}\log\frac{2}{\epsilon}\right)-(1-\epsilon)\frac{\log 2}{l}.
    \end{equation}
    Thus, the minimum $l$ such that $\frac{1}{l}D_{M}(\sigma^{\otimes l}||\rho^{\otimes l})\geq (1-\epsilon)D(\sigma||\rho)$ can be determined as a function of $\epsilon, D(\rho||\sigma),D_{3/2}(\rho||\sigma)$. It is not clear to us if this bound is generally stronger than the one we state in the main text.

    In addition to the measurement of \cite{Hayashi_2001}, which does not depend on $\sigma$, we consider other block measurements which depend on $\sigma$ explicitly. We restrict to qubits for simplicity, but the same can be done for higher dimensional states. 
    $\rho^{\otimes n}$ can be written in block diagonal form according to the eigenspaces of total angular momentum, labeled by a half integer $j$ (see e.g.~\cite{Hayashi_2001}). We get
    \begin{equation}
    \rho^{\otimes l}=\sum_{j=\frac{n\mod 2}{2}}^{l/2} \rho_j^{(l)} \otimes \frac{I}{\nu_j^{(l)}},
    \end{equation}
    where $\nu_j^{(l)}=\binom{l}{l/2-j}\frac{2j+1}{l/2+j+1}$ is the number of equivalent irreducible representations with label $j$, and  $\rho_j^{(l)}$ are positive semidefinite operators. For $\rho=\frac{1+r \sigma_z}{2}$, we have
    \begin{equation}
    \rho_j^{(l)}=\left(\frac{1-r^2}{4}\right)^{l/2-j}\sum_{m=-j}^{j}\left(\frac{1+r}{2}\right)^{j-m}\left(\frac{1-r}{2}\right)^{j+m} \ketbra{}{j,m}{j,m},
    \end{equation}
    where $\ket{}{j,m}$ are eigenvectors of the $z$ component of the angular momentum $J_z$ with eigenvalues $m$. For $\sigma=e^{-i\frac{\theta}{2} \sigma_x}\left(\frac{I+r_2\sigma_z}{2}\right)e^{-i\frac{\theta}{2}\sigma_x}$, we have that $ \sigma^{\otimes n}=\sum_{j=\frac{l\mod 2}{2}}^{l/2} \sigma_j^{(l)} \otimes \frac{I}{\nu_j^{(l)}}$ and $\sigma_j^{(l)}$ is simply obtained rotating the operator at $\theta=0$ with the Wigner matrix $D(e^{-i\frac{\theta}{2} \sigma_x})=e^{-i \theta J_x }$. 
     \begin{equation}
    \sigma_j^{(l)}=e^{-i \theta J_x }\left(\left(\frac{1-r^2}{4}\right)^{l/2-j}\sum_{m=-j}^{j}\left(\frac{1+r}{2}\right)^{j-m}\left(\frac{1-r}{2}\right)^{j+m} \ketbra{}{j,m}{j,m}\right)e^{i \theta J_x }.
    \end{equation}

    As ansatz we consider a measurement that first measures the label $j$ and the multiplicity label, and then measures in an arbitrary rotated basis $\{e^{-i \eta_j J_x }\ketbra{}{j,m}{j,m}e^{i \eta_j J_x }\}$.
    Optimizing the measured relative entropy over $\eta_j$ for each $j$, we get what we call j-angle-optimized measurement in the main text. For $\eta_j=0$ for every $j$, it coincides with the measurement in ~\cite{Hayashi_2001}. These class of measurements can be efficiently realized via Schur sampling~\cite{bacon_efficient_2006,harrow_phd,krovi_efficient_2019}, even in their generalization to arbitrary local dimension $d$. For example, for $d=2$,
    one can first apply weak Schur sampling to measure a value of $j$, uncompute the circuit, apply $e^{i \eta_j \sigma_x/2}$ to every qubit and measure the total $J_z$ (which can also be done locally by measuring in the computational basis).

    Finally, according to ~\cite{berta2017variational}, the  measured relative entropy of two states $\rho$ and $\sigma$ can be computed as
    \begin{equation}
    D_{M}(\sigma||\rho)=\sup_{\omega\geq 0}\tr[\sigma\log \omega]-\tr[\rho\omega]+1,
    \end{equation}
    and the measurement realizing it is a projective measurement on the eigenspaces of the optimal $\omega$. By using the block diagonal form of multi-copy states $\rho^{\otimes l}$, $\sigma^{\otimes l}$, it is clear that $\omega$ can also be chosen to be block diagonal. This expression can be computed numerically using the semidefinite approximation of the matrix logarithm \cite{cvxquad}. In particular, to obtain the points in the plot we have used the MATLAB package %CVX~\cite{cvx, gb08}
    CVXQUAD~\cite{cvxquad}. By the plot in Figure \ref{fig:2}, we see that the $j$-angle-optimized measurement performance is very close to the optimal one as calculated by CVXQUAD. However, the optimization fails for $n$ around 10, likely due to the very large range of order of magnitude of the matrix entries. %likely due to the very large range of order of magnitude of the matrix entries (from $10^{-15}$ to $10^{-4}$ in one inspected case), and gives inaccurate answers for even smaller $n$.

    }
    \subsection{QUSUM for general change point}

    If the change point does not happen at a multiple of $l$, the first inequality of Eq.~(\ref{upperbound0}) still holds, while we can replace the second one with
    \begin{align}
    \bar\tau^{\star}&=\sup_{\nu\geq 0} \sup_{x^\nu: P^{(\infty)}(X^\nu=x^\nu)>0} \EE_{\nu}[lT^*-\nu | lT^* > \nu,X^\nu=x^\nu] \nonumber\\
    &\leq  \sup_{\nu\geq 0} \sup_{x^\nu: P^{(\infty)}(X^\nu=x^\nu)>0} \EE_{0}[l T_{\lfloor\nu/l\rfloor+2}-\nu | T^* > \nu, X^\nu=x^\nu]\nonumber\\
    &=l\EE_{0}[T_2]=l(1+\EE_{0}[T_1]),
    \end{align}
    where the inequality comes from the definition of $T^*$ (Eq.~(\ref{eq:stopbig}) and the equalities from the fact that $T_{\lfloor\nu/l\rfloor+2}$ does not depend on the outcomes before $t=\nu+1$.
    Therefore, we get again Eq.~(\ref{tradeoff}).
    
    We also discuss how we can generalize this achievability result in the case where we have that the state after the change point is unknown and belonging to a family of states $\mathcal{S}$. In this case, we consider the random variable $S_{j}=\sup_{\sigma\in\mathcal{S}}Z_{j}^k(\sigma)$, where $Z_j^k(\sigma)$ is the one from Eq.~(\ref{defzi}), and we generalize the stopping time accordingly, stopping as soon as $S_j\geq h $. In particular we consider the parallel stopping times $T_j^{(\sigma)}$, $j\geq 1$, $\sigma\in \mathcal S$, which are defined as in the main text but depend on the different $\sigma$, and define $T_j=\min_{\sigma\in\mathcal S}T_j^{(\sigma)}$.
    
    Since the POVM ${M^{(k)}_{x_i}}$ for which Eq.~(\ref{convergencemeasured}) holds does not depend on $\sigma$, we get asymptotically (for large blocks and large $h$)
    \begin{align}
    \bar\tau^\star \leq \frac{ h(1+o(1))}{\min_{\sigma\in\mathcal{S}}D(\sigma \| \rho )}.
    \end{align}
    { If the same test is applied to a state $\sigma'$ not necessarily in $\mathcal S$ but such that $\min_{\sigma\in S}D(\sigma'||\rho)-D(\sigma'||\sigma)>0,$ we have asymptotically, for a change point sequence with $\sigma'$ as post-change state
    \begin{align}
    \bar\tau^\star \leq \frac{h(1+o(1))}{\max_{\sigma\in\mathcal{S}}(D(\sigma'||\rho)-D(\sigma'||\sigma))},
    \end{align}
using the convergence of measured relative entropy and Lemma~\ref{lemmadiff}.
    This allows to give guarantee on the delay in the case of unknown post-change state in a continous family $\mathcal{S}'$, if $\mathcal {S}$ is a suitable finite discretization of the family, in the sense that $\inf_{\sigma'\in\mathcal S}\max_{\sigma\in\mathcal{S}}(D(\sigma'||\rho)-D(\sigma'||\sigma))>0$. In the worst case, if the post-change state is in $\mathcal S'$, we have asymptotically
    
    \begin{align}
    \bar\tau^\star \leq \frac{h(1+o(1))}{\inf_{\sigma'\in\mathcal{S}'}\max_{\sigma\in\mathcal{S}}(D(\sigma'||\rho)-D(\sigma'||\sigma))}.
    \end{align}

%    This discretization always exists in finite dimension if $\varepsilon:=\inf_{\sigma'\in\mathcal S}D(\sigma'||\rho)>0$, since 
  %  \begin{itemize}
   % \item An $\epsilon$-cover of $\mathcal{S}'$ in trace distance made of non-pure states exists for any $\epsilon>0$ and any subset
   % \item A reversed Pinsker inequality holds for non-pure states
   % \end{itemize}
    }

    On the other hand, using that $I_{T_1<\infty}=\cup_{\sigma\in \mathcal S}I_{T_1^{(\sigma)}<\infty}$ and the union bound, Eq.~(\ref{boundfalse}) becomes
    \begin{align}\label{boundfalse2}
    P_\infty(T_1<\infty)&\leq \sum_{\sigma\in \mathcal S}P_\infty(T_1^{(\sigma)}<\infty)%\nonumber\\&
    =\sum_{\sigma\in \mathcal S}\mathbb E_0\left[\frac{p}{q(\sigma)} I_{ T_1^{(\sigma)}<\infty}\right]%\nonumber\\
    =\sum_{\sigma\in\mathcal S}\EE_0\left[\ex{-Z^{T_1}_1(\sigma)}I_{ T_1^{(\sigma)}<\infty}\right]\nonumber\\
    &=\sum_{\sigma\in\mathcal S}\mathbb{E}_0\left[\ex{-h-s(\sigma)}I_{ T_1^{(\sigma)}<\infty}\right]\leq |\mathcal S|\ex{-h} =: \alpha,\
    \end{align}
    meaning that we can take $h=\log \bar{T}_{\text{FA}}+\log |\mathcal S|$, obtaining the desired statement as $\bar{T}_{\text{FA}}\rightarrow\infty$. 

    \subsection{Proof of Theorem 3}\label{LAI_proof}
    We restate the theorem and prove it.
    \begin{theorem}
    For a change point model with log-likelihoods $Z_k^{(\nu)}$ and $\epsilon>0$, no strategies can exceed the trade-off given by $\bar\tau^*\geq (1-\epsilon) I^{-1}\log{\bar{T}_{\text{FA}}}(1+o(1))$, for large $\bar{T}_{\text{FA}}$, for any $I$ that satisfies the condition 
    \begin{align}\label{eq:condP2}
    &\lim_{n\rightarrow\infty} \sup_{\nu\geq 0} \sup_{x^\nu: P^{(\infty)}(X^\nu=x^\nu)>0}
 P^*_\nu(x^\nu) =0 %\quad \forall\delta>0
    \mbox{ where }\\
    P^*_\nu(x^\nu)&:=P_{\nu}\left\{ \max_{ t\leq n }  \lambda^{(\nu)}_{\nu+t} \geq I(1+\epsilon) n \middle\vert X^\nu=x^\nu \right\} \nonumber
    \end{align} 
    \end{theorem}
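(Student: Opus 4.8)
The plan is to reproduce Lai's information-theoretic lower bound, with the change-of-measure step adapted to adaptive collective measurements. Write $\gamma:=\bar{T}_{\text{FA}}$ and fix a window length $n_\gamma:=\lfloor (1-\epsilon)\tfrac{\log\gamma}{I}\rfloor$. The goal is to exhibit a change point $\nu^\star$ and a pre-change record $x^{\nu^\star}$ for which the conditional probability of raising the alarm within $n_\gamma$ steps of the change,
\begin{equation}
\beta_\nu(x^\nu):=P_\nu\!\left(T-\nu<n \mid T>\nu,\,X^\nu=x^\nu\right),
\end{equation}
vanishes as $\gamma\to\infty$. The elementary bound $\EE_\nu[T-\nu\mid T>\nu,X^\nu=x^\nu]\ge n\,(1-\beta_\nu(x^\nu))$ (valid since $T-\nu\ge 1$ on the conditioning event) then yields $\bar\tau^\star\ge n_\gamma(1-o(1))$, which is the claim.

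First I would convert $\beta_\nu$ into a false-alarm quantity by change of measure. Because all copies up to time $\nu$ are $\rho$ under both hypotheses, the post-measurement state at step $\nu$ is the same for $P_\nu$ and $P_\infty$, so conditionally on $\{X^\nu=x^\nu\}$ the likelihood ratio telescopes exactly to $\ex{\lambda^{(\nu)}_{m}}$; the standing assumption $\mathrm{supp}\,\sigma\subseteq\mathrm{supp}\,\rho$ guarantees (even for joint instruments, since $\mathrm{supp}(\sigma\otimes\rho_{x^\nu})\subseteq\mathrm{supp}(\rho\otimes\rho_{x^\nu})$) that no outcome has $p=0<q$, so $P_\nu$ is absolutely continuous with respect to $P_\infty$ and the change of measure is exact. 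Splitting on the event $G^c=\{\max_{t\le n}\lambda^{(\nu)}_{\nu+t}\ge I(1+\epsilon)n\}$ that defines $P^*_\nu(x^\nu)$, on its complement every stopped ratio with $T-\nu<n$ is at most $\ex{I(1+\epsilon)n}$; summing over the finitely many stopping values $m$ gives
\begin{equation}
\beta_\nu(x^\nu)\le \ex{I(1+\epsilon)n}\,P_\infty\!\left(\nu<T<\nu+n\mid X^\nu=x^\nu\right)+P^*_\nu(x^\nu).
\end{equation}

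Next I would spend the false-alarm budget to locate a good history, which is how I bypass the suprema over $\nu$ and $x^\nu$. A double count over windows gives $\sum_{\nu=0}^{N-1}P_\infty(\nu<T\le\nu+n)\le n\,P_\infty(T<\infty)\le n$, since each stopping value is counted at most $n$ times. The total unnormalized weight of the continuation records is $\sum_{\nu=0}^{N-1}P_\infty(T>\nu)=\EE_\infty[\min(T,N)]$, which increases to $\gamma$ as $N\to\infty$. Choosing $N$ large enough that $\EE_\infty[\min(T,N)]\ge \gamma/2$, the weighted average of $P_\infty(\nu<T<\nu+n\mid X^\nu=x^\nu)$ over all pairs $(\nu,x^\nu)$ with $\nu<N$ is at most $2n/\gamma$, so some pair $(\nu^\star,x^{\nu^\star})$ achieves this. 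Crucially, this argument uses only the unnormalized weight $\sum_\nu P_\infty(T>\nu)$, so I never need a uniform lower bound on $P_\infty(T>\nu)$, which may genuinely fail.

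Finally I would balance the two terms with $n=n_\gamma$: one has $\ex{I(1+\epsilon)n_\gamma}\le \gamma^{(1+\epsilon)(1-\epsilon)}=\gamma^{1-\epsilon^2}$, so the first term is at most $2n_\gamma\,\gamma^{-\epsilon^2}\to0$, while $P^*_{\nu^\star}(x^{\nu^\star})\le \sup_{\nu,x^\nu}P^*_\nu(x^\nu)\to0$ by hypothesis since $n_\gamma\to\infty$; hence $\beta_{\nu^\star}(x^{\nu^\star})\to0$ and $\bar\tau^\star\ge n_\gamma(1-o(1))=(1-\epsilon)\tfrac{\log\gamma}{I}(1+o(1))$. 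The main obstacle I anticipate is making the change-of-measure step rigorous in the adaptive quantum regime: I must argue that conditioning on the discrete record $x^\nu$ forces the two data-generating processes to share an identical prefix, so their likelihood ratio is exactly $\ex{\lambda^{(\nu)}}$ despite highly correlated outcomes and adaptive instruments, and that the support condition rules out singular outcomes. The remaining delicacy is the joint choice of $n_\gamma$ and horizon $N$ so that the amplification $\ex{I(1+\epsilon)n_\gamma}$ is beaten by the $1/\gamma$ gain from the false-alarm budget while $P^*$ still vanishes; this is precisely where the slack $1-\epsilon^2<1$ in the exponent is needed.
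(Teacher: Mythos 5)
Your proposal is correct, and it follows the same Lai-type skeleton as the paper's proof in the Supplemental Material: reduce to a conditional probability via Markov's inequality, split on the exceedance event defining $P^*_\nu$, apply an exact change of measure (using that pre-change records have identical law under $P_\nu$ and $P_\infty$, so the likelihood ratio on the stopping event is $\ex{\lambda^{(\nu)}_T}$ and is capped by $\ex{I(1+\epsilon)n}$ off the exceedance event), and balance parameters with $n\approx(1-\epsilon)I^{-1}\log\bar{T}_{\text{FA}}$ and $c=(1-\epsilon^2)\log\bar{T}_{\text{FA}}$. Where you genuinely diverge is the step that locates a favorable change point: the paper defines $\Delta:=\sup_{i}P_\infty(T>i+m\mid T>i)$, telescopes $P_\infty(T>i+km\mid T>i)\leq\Delta^k$ to obtain $\bar{T}_{\text{FA}}\leq m/(1-\Delta)$, and must then split cases according to whether the supremum is attained; you instead double-count stopping values over a truncated horizon ($\sum_{\nu<N}P_\infty(\nu<T\leq\nu+n)\leq n$ against total weight $\EE_\infty[\min(T,N)]\geq\bar{T}_{\text{FA}}/2$) and extract a good pair $(\nu^\star,x^{\nu^\star})$ by a weighted-averaging argument. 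Your variant buys a cleaner argument: no sup-versus-max case distinction, no need to worry about positivity of $P_\infty(T>\nu)$ (zero-probability histories carry zero weight), at the cost of an asymptotically irrelevant factor of $2$; moreover it directly produces a conditioning record, which is exactly what $\bar\tau^\star$ requires. What it gives up is the intermediate bound on $\bar\tau$: the paper conditions only on $\{T>\nu\}$ (bounding the resulting average of $P^*_\nu(x^\nu)$ over records by its supremum), so its conclusion $\bar\tau^\star\geq\bar\tau\geq(1-\epsilon)I^{-1}\log\bar{T}_{\text{FA}}(1+o(1))$ is slightly stronger than yours, matching the form of Theorem 2 in the main text; since the statement at hand claims only the bound on $\bar\tau^*$, your proof fully suffices for it.
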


    \begin{proof}
    First notice that for any $\nu\geq0$
    \begin{equation}\label{eq:taubound}
    \bar{\tau}^*\geq \bar{\tau} \geq \EE_{\nu}[T-\nu | T > \nu],
    \end{equation}
    
    and by Markov inequality
    \begin{equation}\label{eq:markov}
    \EE_{\nu}[T-\nu | T > \nu]\geq m P_{\nu}(T-\nu\geq m |T>\nu)  
    \end{equation}
    
    We start from observing that, for the event $E_{\nu}:=\nu< T<\nu+m$ we have
    \begin{equation}\label{eq:event}
    P_{\infty}(E_{\nu})=\mathbb{E}_{\nu}[I_{E_{\nu}}\ex{-\lambda^{(\nu)}_{T}}],
    \end{equation}
    where 
    \begin{align}
    \lambda^{(\nu)}_{T}&=\log \frac {\tr[M_{x_1,...,x_T}(\rho^{\otimes \nu-1}\otimes \sigma^{\otimes T-\nu})]}{\tr[M_{x_1,...,x_T}\rho^{\otimes T}]}, &\mathrm{if}\,\,T> \nu, \nonumber\\
    \lambda^{(\nu)}_{T}&=0,  &\mathrm{if}\,\,T\leq\nu,
    \end{align}
    and the change of measure is justified since $D_{\max}(\sigma||\rho)<\infty$.
    Then we have
    \begin{align}
    \mathbb{E}_{\nu}[I_{E_{\nu}}\ex{-\lambda^{(\nu)}_{T}}]&\geq \mathbb{E}_{\nu}[I_{\{E_{\nu}, \lambda^{(\nu)}_{T}< c\}}\ex{-\lambda^{(\nu)}_{T}}]%\nonumber\\ &
    \geq \ex{-c}P_{\nu}(E_{\nu}, \lambda^{(\nu)}_{T}< c)
   % \\&
    \geq \ex{-c}P_{\nu}(E_{\nu}, \max_{\nu< n< \nu+m}\lambda^{(\nu)}_{n}< c)\nonumber\\
    &\geq \ex{-c}P_{\nu}(E_{\nu})-\ex{-c}P_{\nu}\left(\max_{\nu< n< \nu+m}\lambda^{(\nu)}_{n}\geq c\right),
    \end{align}
    where the last inequality comes from the fact that for any two events $A$ and $B$, $P(A\cap B)\geq P(A)-P(B^c)$.
    From this chain of inequalities and Eq.~(\ref{eq:event}) we get
    \begin{align}
    P_{\nu}(\nu< T<\nu+m)&\leq \ex{c} P_{\infty}(T<\nu+m)%\nonumber\\ &
    +P_{\nu}\left(\max_{\nu\leq n< \nu+m}\lambda^{(\nu)}_{n}\geq c\right)
    \label{eq:ineq1}
    \end{align}
    This inequality is valid also if we substitute probabilities with conditional probabilities with respect to the event $T>\nu$. In order to see this, we observe  that  $P_{\infty}(T>\nu)=1-P_{\infty}(T\leq\nu)=1-P_{\nu}(T\leq\nu)=P_{\nu}(T>\nu)$, since the probability of stopping at $\nu$ depends only on measurement outcomes up to $\nu$, whose distribution is identical for the cases where the change point is at $\nu$ or at infinity. Moreover, $P_{\infty}(T> \nu)>0$ if we require $\bar{T}_{\text{FA}}$ sufficiently large. We can then divide both members of the inequality \eqref{eq:ineq1} by $P_{\infty}(T> \nu)$ and get by Bayes' rule
    \begin{align}
    P_{\nu}(T<\nu+m|T>\nu)\leq \ex{c} P_{\infty}(T<\nu+m|T>\nu) %\nonumber\\
    +P_{\nu}\left(\max_{\nu\leq n< \nu+m}\lambda^{(\nu)}_{n}\geq c|T>\nu\right)
    \label{eq:ineq2}
    \end{align}
    
    We next show that the RHS, and hence the conditional probability go to zero as we increase $m$. We first concentrate on the first term of the right hand side. For this purpose, we write
    \begin{align}
    \bar{T}_{\text{FA}}&=\mathbb{E}_{\infty}[T]=\sum_{j=0}^{\infty}P_{\infty}(T> j)%\nonumber\\&
    =\sum_{i=0}^{m-1}\sum_{k=0}^{\infty}P_{\infty}(T> i+km)\nonumber\\
    &=\sum_{i=0}^{m-1}\sum_{k=0}^{\infty}P_{\infty}(T> i)P_{\infty}(T> i+km|T> i)I_{P_{\infty}(T> i)>0}\end{align}
    
    Defining 
    \begin{equation}
    \Delta:=\sup_{i\geq 0: P_{\infty}(T>i)>0}P_{\infty}(T> i+m|T> i),
    \end{equation}
    we have that if $P_{\infty}(T>i)>0$ either $P_{\infty}(T> i+km|T> i)=0$ or
    \begin{align}\label{trick}
   & P_{\infty}(T> i+km|T> i)=P_{\infty}(T> i+(k-1)m+m|T> i)
    \nonumber\\
    &=P_{\infty}(T> i+(k-1)m+m|T> i+(k-1)m)P(T> i+(k-1)m|T> i)\nonumber\\
   & = \prod_{j=0}^{k-1}P_{\infty}(T> i+j m+m|T> i+j m)%\nonumber\\
    \leq \Delta^k.
    \end{align}
    
    Therefore
    \begin{align}
    \bar{T}_{\text{FA}}&\leq \sum_{i=0}^{m-1}\sum_{k=0}^{\infty}P_{\infty}(T>i)\Delta^k%\nonumber\\
    =\frac{1}{1-\Delta}\sum_{i=0}^{m-1}P_{\infty}(T>i)\leq \frac{m}{1-\Delta}
    \nonumber\end{align}
    
    From which we obtain $\Delta\geq1- m/\bar{T}_{\text{FA}}$. If the superior in the definition of $\Delta$ is actually a maximum, then we can choose $\nu$ such that
    \begin{align}
    &P_{\infty}(T\leq\nu+m|T> \nu)=1-P_{\infty}(T>\nu+m|T> \nu)%\nonumber\\
    = 1-\Delta \nonumber \leq m/\bar{T}_{\text{FA}}.
    \end{align}
    
    Otherwise, the last inequality in Eq.~(\ref{trick}) is strict and we get $\Delta>1-m/\bar{T}_{\text{FA}}$. Since $\Delta>1-m/\bar{T}_{\text{FA}}$ implies that there exists $\nu$ such that $P_{\infty}(T>\nu+m|T> \nu)>1-m/\bar{T}_{\text{FA}}$, and for the same $\nu$ it holds that $P_{\infty}(T>\nu+m|T> \nu)\geq 1-m/\bar{T}_{\text{FA}}$.
    
    We now take $m=(1-\epsilon)I^{-1} \log \bar{T}_{\text{FA}} $, and $c=(1+\epsilon)m I=(1-\epsilon^2)\log\bar{T}_{\text{FA}}$ and upper-bound the first term in RHS of inequality \eqref{eq:ineq2} by
    \begin{align}
    \ex{c} P_{\infty}(T<\nu+m|T>\nu)%\nonumber\\&
    \leq (\bar{T}_{\text{FA}})^{1-\epsilon^2}\frac{(1-\epsilon)I^{-1}\log \bar{T}_{\text{FA}}}{\bar{T}_{\text{FA}}}%\nonumber\\
    =\frac{(1-\epsilon)I^{-1}\log \bar{T}_{\text{FA}}}{\bar{T}_{\text{FA}}^{\epsilon^2}}
    \end{align}
    
    Putting all together,  we conclude that the LHS of \eqref{eq:ineq2} is upper-bounded by a quantity that goes to zero:
    \begin{align}
    P_{\nu}(T<\nu+m|T>\nu)\leq & \frac{(1-\epsilon)I^{-1}\log \bar{T}_{\text{FA}}}{\bar{T}_{\text{FA}}^{\epsilon^2}} \nonumber
    \\ &+P_{\nu}\left(\max_{\nu\leq n< \nu+m}\lambda^{(\nu)}_{n}\geq (1+\epsilon)m I|T>\nu\right)%\nonumber\\&
    \rightarrow 0, \qquad (m\rightarrow \infty)
    \end{align}
    
    The first term goes to zero since $(\log \bar{T}_{\text{FA}})/{\bar{T}_{\text{FA}}^{\epsilon^2}} $ goes to zero as $m$ goes to infinity by definition of $m$, and the second term goes to zero by hypothesis of the theorem.
    
    Recalling Eq.~(\ref{eq:markov}), this means that there exists $\nu$ such that
    \begin{equation}
    \EE_{\nu}[T-\nu | T > \nu]\geq (1-\epsilon)\frac{\log{\bar{T}_{\text{FA}}}}{I}(1+o(1)),
    \end{equation}
    and therefore, by Eq.~(\ref{eq:taubound})
    \begin{equation}
    \bar{\tau}^*\geq \bar{\tau}\geq (1-\epsilon)\frac{\log{\bar{T}_{\text{FA}}}}{I}(1+o(1)),
    \end{equation}
    \end{proof}
    \subsection{Change point with channels}
    
    In this section we will extend the previous results to quantum channels. From now on we consider a process that is modeled by a sequences of quantum channel $\mathcal N_{A_i\rightarrow B_i}^{(i)}$ which can be either $\mathcal{M}_{A_i\rightarrow B_i}$ (if $i>\nu$) or $\mathcal{N}_{A_i\rightarrow B_i}$ (if $i\leq\nu$). Let us first consider a modification of QUSUM to show an achievable rate. For every instance of the process we input a state $\rho$ into the channel, leading to an output state $\rho_i=\NN^{(i)}(\rho)$. We can now simply run the previously used state version of QUSUM on these output states and generalizing Equation~\eqref{Eq:meas-RE-achievable} we get
    \begin{align}
    \bar\tau^\star \leq \frac{\log\bar{T}_{\text{FA}}}{D_M(\mathcal{M} \| \mathcal{N} )}+O(1),
    \end{align}
    where
    \begin{align}
        D_M(\mathcal{M} \| \mathcal{N} ) = \sup_\rho D_M(\mathcal{M}(\rho) \| \mathcal{N}(\rho)), 
    \end{align}
    by optimizing over input states $\rho$. The same way collective measurements lead to an advantage in the state case, also entangled input states can improve the performance. Using joint states on $l$ systems combined with collective measurements on the same, we get
    \begin{align}\label{channel-tradeoff}
    \bar\tau^\star \leq \frac{\log\bar{T}_{\text{FA}}/l}{\frac1l D_M(\mathcal{M}^{\otimes l} \| \mathcal{N}^{\otimes l} )}+O(1).
    \end{align}
    Finally we observe that 
    \begin{align}\label{Eq:channel-DM-conv}
        \lim_{l\rightarrow\infty}\frac1l D_M(\mathcal{M}^{\otimes l} \| \mathcal{N}^{\otimes l} ) = D^\infty(\mathcal{M}\|\mathcal{N}), 
    \end{align}
    where
    \begin{align}
        D^\infty(\mathcal{M}\|\mathcal{N}) = \lim_{l\rightarrow\infty} \sup_\rho \frac1l D(\mathcal{M}^{\otimes l}(\rho) \| \mathcal{N}^{\otimes l}(\rho) ). 
    \end{align}
    
    This can be seen as follows. The $\leq$ direction is a simple consequence of data processing. For the $\geq$ direction fix $l=mk$ and consider
    \begin{align}
        &\frac1{mk} D_M(\mathcal{M}^{\otimes {mk}} \| \mathcal{N}^{\otimes {mk}} ) 
        \geq \sup_\rho \frac1{mk} D_M((\mathcal{M}^{\otimes m}(\rho))^{\otimes k} \| (\mathcal{N}^{\otimes m}(\rho))^{\otimes k} ). 
    \end{align}
    
     Taking first the limit $k\rightarrow\infty$ gives us the relative entropy, then taking $m\rightarrow\infty$ gives the desired regularization. We remark here that the regularized channel relative entropy is known to be the optimal achievable rate in the Stein's Lemma for quantum channels~\cite{wilde2020amortized,wang2019resource,fang2020chain} and that the regularization is generally necessary~\cite{fang2020chain}. 
     Overall, this leads to the asymptotic achievable tradeoff 
     \begin{align}\label{channel-achievable}
    \bar\tau^\star \leq \frac{\log\bar{T}_{\text{FA}}(1+o(1))}{ D^\infty(\mathcal{M} \| \mathcal{N} )}.
    \end{align}
    
    For the converse bound, we again mostly follow the state case, with the difference that the set of allowed strategies is larger. In the case of channels, any sequential strategy can be described as follows: first, we are allowed to prepare any initial state $\rho^{(0)}_{LA_1}$ (possibly infinite dimensional); assume that at step $i$ we have a access to a state $\rho^{x^{i-1}}_{LA_{i}}$, first apply the channel $\mathcal N_{A_i\rightarrow B_i}^{(i)}$ to $\rho^{x^{i-1}}_{LA_i}$, and then we apply the instrument $\mathcal E^{x^{i-1}}$, corresponding to CP maps  $\mathcal E^{x_{i}}_{LB_i\rightarrow LA_{i+1}}$ (where we drop the dependence on $x^{i-1}$ for readability); the state conditioned on the new outcome $x_i$ is then denoted as $\rho^{x^i}_{L A_{i+1}}$. 
    
    The probability of getting outcome $x^i$ is then recursively defined as 
    \begin{align}
    p(x_i|x^{i-1})&=\tr_{LA_{i+1}}[\mathcal E^{x_{i}}_{LB_i\rightarrow L A_{i+1}}\circ%\nonumber\\&\circ 
    \mathcal N_{A_i\rightarrow B_i}[\rho^{x^{i-1}}_{LA_i}]],
    \end{align}
    for $i\leq\nu$ or if there is no change point, and
    \begin{align}
    q^{(\nu)}(x_i|x^{i-1})&=\tr_{LA_{i+1}}[\mathcal E^{x_{i}}_{LB_i\rightarrow L A_{i+1}}%\nonumber\\
    \circ 
    \mathcal M_{A_i\rightarrow B_i}^{(i)}[\rho^{x^{i-1}}_{LA_i}]],
    \end{align}
    if the change point is at $\nu$ and $i>\nu$.
    
\medskip
    
    Using the same notation as in the state case, we get
    \begin{align}
        &P^{(0)}\left(\max_{1\leq i\leq n} \lambda_{x^i}\geq n I (1+\epsilon)\right) %\\
        = \sum_{\substack{x^n: \\ \max_{1\leq i\leq n} \lambda_{x^i}\geq n I(1+\epsilon)}} q^{(0)}(x^n) \\
        &= \sum_{1\leq i \leq n} \sum_{x^i\in S_i} q^{(0)}(x^i)= \sum_{1\leq i \leq n} \sum_{x^i\in S_i} \tilde q^{(0)}(x^i) 
    \end{align}
    where $\tilde q^{(0)}(x^i)$ are the outcomes probabilities corresponding to a modified strategy, where upon obtaining an output $x^i\in S_i$, a fixed state is send through the next channels and the output is discarded.  Clearly, $\tilde q^{(0)}(x^i)$ can  still be defined recursively as  $q^{(0)}(x^i)$.

    At fixed $n$, the test obtained by accepting if the strategy finds $x^i\in S_i$ at some step $i$ is a valid binary test, which has probability of success when performed on $n$ accesses to $\mathcal{M}$ equal to $P^{(0)}\left(\max_{1\leq i\leq n} \lambda_{x^i}\geq n  I(1+\epsilon)\right)$. 
    
    In the same way we define $\tilde p(x^i)$, obtained applying the same strategy to accesses to $\mathcal N$. 
    By construction, we have $\tilde q^{(0)}(x^i)\geq \ex{n I(1+\epsilon)}\tilde p(x^i)$ for $x^{i}\in S_i$.  A strong converse bound resulting from~\cite{fawzi2021defining} applies to this strategy and the pair $(\mathcal M,\mathcal N)$, and the same argument of the state case gives that if $I\geq \tilde D_1^\infty(\mathcal{M}\|\mathcal{N})$, $\lim_{n\rightarrow \infty }P^{(0)}\left(\max_{1\leq i\leq n} \lambda_{x^i}\geq n  I(1+\epsilon)\right)=0$.
    
    However, note that we do not know whether $\tilde D^\infty_1(\mathcal{M}\|\mathcal{N}) = D^\infty(\mathcal{M}\|\mathcal{N})$, but we conjecture that they are equal.
    Combining the above with the QUSUM optimality condition we have that, asymptotically, for every $\epsilon>0$,
    \begin{align}
        \bar\tau^\star \geq (1-\epsilon)\frac{\log\bar{T}_{\text{FA}}(1+o(1))}{\tilde D^\infty_1(\mathcal{M}\|\mathcal{N})}. 
    \end{align}
    Our achievability and optimality results give close bounds on the asymptotic performance and they do indeed match if the aforementioned conjecture holds. 
   
    \end{document}